\title{Exploring Monotone Priority Queues for Dijkstra Optimization} 
\titlerunning{Monotone Priority Queues} 
\author{Jonas Costa\thanks{Corresponding author}}{PPGI - Instituto de Computação, Universidade Federal do Amazonas, Manaus, Brasil \and  \textit{jonas.costa@icomp.ufam.edu.br}} {jonas.costa@icomp.ufam.edu.br}{https://orcid.org/0009-0008-5853-4132}{}
\author{Lucas Castro}{PPGI - Instituto de Computação, Universidade Federal do Amazonas, Manaus, Brazil \and \textit{lucas.castro@icomp.ufam.edu.br}}{lucas.castro@icomp.ufam.edu.br}{}{}
\author{Rosiane de Freitas}{PPGI - Instituto de Computação, Universidade Federal do Amazonas, Manaus, Brazil \and \textit{rosiane@icomp.ufam.edu.br}}{rosiane@icomp.ufam.edu.br}{https://orcid.org/0000-0002-7608-2052}{}
\authorrunning{J. Costa, L. Castro and R. de Freitas} 
\keywords{algorithms, data structures, priority queues, shortest paths} 
\let\oldnl\nl
\newcommand\nonl{%
  \renewcommand{\nl}{\let\nl\oldnl}}
\newcommand{\act}{\ensuremath{\alpha} }
\newcommand{\bb}{\ensuremath{\mathcal{B}} }
\newcommand{\ds}[1]{\ensuremath{d_s(#1)} }
\begin{document}

\maketitle
\setcounter{footnote}{0} 
\begin{abstract}
This paper presents a comprehensive overview of monotone priority queues, focusing on their evolution and application in shortest path algorithms. Monotone priority queues are characterized by the property that their minimum key does not decrease over time, making them particularly effective for label-setting algorithms like Dijkstra’s. Some key data structures within this category are explored, emphasizing those derived directly from Dial’s algorithm, including variations of multi-level bucket structures and radix heaps. Theoretical complexities and practical considerations of these structures are discussed, with insights into their development and refinement provided through a historical timeline.  
\end{abstract}

\section{Introduction}\label{sec:intro}
A \emph{priority queue} is a data structure that stores key-valued data and generally supports three operations: 
\begin{itemize}
    \item \texttt{insert}$(e, Q)$ add the element $e$ in the queue $Q$;
    \item \texttt{delete} or \texttt{remove}$(e, Q)$ removes the element $e$ from $Q$;
    \item and \texttt{extract-min}$(Q)$ removes and returns an element of $Q$ with minimum priority.\footnote{It could be alternatively the maximum. In this paper, only minimum priority queues were considered because of their role in algorithms for the SPP.}
\end{itemize} 
The name ``priority queue'' derives from the idea that the moment an element leaves the queue depends on its priority rather than the time it was inserted, in contrast with the principle \emph{first in, first out} (FIFO) that rules regular queues. 
 
Among several applications, priority queues have a special role in label-setting algorithms for the \emph{Shortest Path Problem} (SPP). 
This problem has two main versions: to find the shortest path from a source $s$ to all other vertices and find a shortest path between a pair of nodes $s$ and $t$.
The first is sometimes called the Single Source Shortest Path problem or Shortest Path Tree Problem and besides generalizing the second, it is not harder. 
Therefore, from now it is considered only the single source version.
Typically, for a weighted directed graph $G = (V, A, w)$ and a vertex $s\in V(G)$, a label-setting method builds a rooted directed tree $T_s$ containing a shortest path from $s$ to all vertices reachable from $s$\footnote{This particular kind of directed tree is often called arborescence.}.
At the same time, it computes $d_s(v)$ which is the distance from $s$ to $v$ for each vertex $v\in V(G)$ as follows:
\begin{enumerate} 
    \item Set $T = \{s\}$ and $d_s(s) = 0, d_s(v) = \infty, \forall v\in V(G)\setminus\{s\}$.
    \item Select an arc $uv$ from a vertex $u\in T$ to a vertex $v\in V(G)\setminus T$ that minimizes $d(u) + w(uv)$. Add $uv$ in $T$ and set $d(v) = d(u)+w(uv)$.
    \item Repeat the previous step until either $T = V(G)$ or $t \in T(G)$ depending on the version of the SPP.
\end{enumerate}

The most iconic of these algorithms is due to Dijkstra \cite{dijkstra_note_1959}. 
While in 1959 the concept of priority queues was not yet established, today it is clear that the complexity of Dijkstra's algorithm is attached to the complexity of the priority queue used in the selection step, i.e., step 2 above.
To be efficiently used by a label-setting algorithm, a priority queue must support an extra operation of \emph{decrease key} allowing the lowering of a key from an element already inside the structure. 

The way it is presented today, Dijkstra's algorithm maintains the candidates to be added in $T$ in a priority queue, with priority given by the best distance from the source vertex $s$ observed so far. 
However, the addition of a new vertex to $T$ can change these distances and therefore the priority of some vertices. This is why the possibility of changing priorities is important in this context. 

The \emph{binary heap} is probably the most popular among the priority queues and was introduced by Williams \cite{williams1964algorithm} inside Algorithm 232: Heapsort.
Although it was designed for a sorting algorithm, the use of binary heaps in Dijkstra's algorithm dates back to the early 1970s. In \cite{johnson_shortest_1972}, \citeauthor{johnson_shortest_1972} describes how to use d-heaps, a generalization of binary heaps, to find the next node to be added to $T$ and analyzes the performance of this approach. 
He also points to an older reference that tried a similar approach.
However, binary heaps work for general purposes, there are priority queues optimized to shortest paths algorithms.

As defined by Thorup in \cite{thorup_ram_1996}, a priority queue is \emph{monotone} if its minimum is non-decreasing over time. 
This implies, for example, that one cannot insert an element with a key smaller than the key of the last minimum extracted. 
\textit{A priori}, this is a restriction on the sequence of operations that will be performed rather than a restriction on a data structure. 
In this context, a monotone priority queue is a data structure designed or optimized to work on such a restricted sequence of operations.
They are suited for label-setting algorithms because they naturally ``produce'' monotone sequences of operations.
Sometimes, for further optimization, it is also assumed that the keys are integers and that the maximum key to be inserted is known.
Again, this is plausible in the context of shortest path algorithms.
In this work, a monotone priority queue is considered in a broader sense of a structure optimized for Dijkstra-like algorithms.

The existence of such priority queues predates its designation as monotone.
Dial's algorithm \cite{dial_algorithm_1969} for the shortest path problem implicitly uses a monotone priority queue. 
Further analyses identified that the time complexity of the operations on this priority queue was bounded by the maximum arc weight rather than by the number of elements inside the queue, as in the case of binary heaps.
Over time, Dial's data structure evolved into more sophisticated priority queues with better bounds.
Much of this development was due to DIMACS 5th-Implementation Challenge \footnote{\url{http://dimacs.rutgers.edu/programs/challenge/}} in  1995 which focused Priority queues, Dictionaries, and Multi-Dimensional Point Sets. 
Some results in both theoretical and practical fronts date from quite after this challenge edition.

This paper provides an overview of monotone priority queues, emphasizing their application in shortest path algorithms. 
It aims to elucidate the key characteristics and time complexities of the primary data structures within this category, specifically those that are directly derived from Dial's algorithm and are expected to work well in practice. 
Theoretical results within the broader context of monotone priority queues are also provided.
Additionally, the paper seeks to present a historical perspective on the evolution of these structures, offering insights into their development and refinement over time.

The remainder of this paper is organized as follows: Section \ref{sec:pre} provides some groundwork by introducing the shortest path problem and related definitions. 
Section \ref{sec:dial} delves into Dial’s algorithm, discussing its origins and implementation.
Section \ref{sec:multi} explores the evolution into multi-level bucket structures, including two-level, $k$-level buckets, and hot queues. 
Section \ref{sec:radix} focuses on the radix-heap data structure, reviewing its variants and their respective time complexities. In Section \ref{sec:theo}, the theoretical limits of Dijkstra's algorithm with different priority queues are examined, under the constraint of integer weights. 
Finally, Section \ref{sec:conclusion} provides a historical timeline of the key developments in monotone priority queues.

\section{Preliminaries}\label{sec:pre}
The shortest path problem considered here is defined over a directed graph $G = (V, A, w)$ and a vertex $s\in V(G)$, where $w: A\rightarrow \mathbb{Z}_+$ is a non-negative integer \emph{weight} function of the arcs.
The problem asks then for the shortest paths from $s$ to all the other vertices of $G$.
The vertex $s$ is often called \emph{source} or \emph{root} and the $C=\max\{w(a) : a\in A(G)\}$ is the maximum weight of an arc of $G$.
One may assume that $G$ does not have loops, i.e., arcs on the form $vv$.
Considering a vertex $u\in V(G)$, the set $N^+(u)= \{v : uv \in A(G)\}$ is the  out-neighborhood of $u$. The distance from the source to $u$ is denoted by $\ds{u}$. 

This paper deals with bucket-based priority queues and in this context, a \emph{bucket} is a data structure that stores a dynamic set of elements, supporting insertion and deletion. 
Usually, a double-linked list is enough to implement a bucket, but one may also implement it as a dynamic array for another example. 
The latter may use more memory but is a more cache-friendly approach, but the data structures will be discussed here in a higher point level of abstraction so this kind of technicality will be left out.

Unless specified otherwise, an \emph{element} that will be stored in a priority queue is a pair $(v, k)$ where $v$ is a vertex of the graph and $k$ is a non-negative integer priority or label. 
Concerning the operations of a priority queue, generally, only $k$ is relevant, so an abuse of language is used to compare elements in terms of their keys.
For example, for two elements $a = (u, k_1)$ and $b=(v, k_2)$, to say that $a$ is larger than $b$ means that $k_1>k_2$ and so on.

The most famous algorithm for the SPP is Dijkstra's labeling method \cite{dijkstra_note_1959}.
In Algorithm \ref{algo:Dijkstra}, a modern implementation of this method that explicitly uses a priority queue is described.
Initially, all vertices are inserted into a priority queue $Q$ with infinite priority, except for $s$ which is inserted with priority $0$.
Extracting a vertex $v$ from $Q$ is equivalent to adding it to the shortest path tree $T$, it implies that its distance will no longer change. 
The first extracted vertex is always $s$ because it is the only one with non-infinite priority. 
The algorithm also keeps two arrays $\ds{v}$ and $\pi$ so that at the end of the execution $\ds{v}$ holds the distance from $s$ to $v$ and $\pi(v)$ has the predecessor of $v$ in the shortest path from $s$ to $v$\footnote{Here the notation $\ds{v}$ and $\pi(v)$ is used instead of $d_s[v]$ and $\pi[v]$ because these are graph parameters, not merely algorithm variables.}.

\begin{algorithm}
    \SetKwFunction{dijkstra}{Dijkstra}
    \SetKwFunction{Kwinsert}{insert}
    \SetKwFunction{Kwdecrease}{decrease-key}
    \SetKwFunction{Kwextract}{extract-min}

    \caption{Dijkstra's algorithm}\label{algo:Dijkstra}
    \SetAlgoLined
    \Indm\nonl\dijkstra{$G=(V,A,w), s$} \\
    \Indp
    \ForAll{$v \in V(G)\setminus s$}{
        $\ds{v} = \infty,\, \pi(v) = \Kwnull$\;
        \Kwinsert{$(v, \infty), Q$}\;\label{line:insertv}
    }
    $\ds{s} = 0,\, \pi(s) = \Kwnull$\;
    \Kwinsert{$(s, 0), Q$}\;\label{line:inserts}
    \While{$Q\neq \emptyset$}{\label{line:loop_pq}
        $u\gets$ \Kwextract{$Q$}\;\label{line:extract}
        \ForAll{$v \in N^+(u)$}{\label{line:loop_out}
            \If{$\ds{v} > \ds{u} + w(uv)$}{
                $\ds{v} \gets \ds{u} + w(uv)$ \;
                $\pi(v) = u$ \;
                \Kwdecrease{$(v, \ds{u} + w(uv), Q)$}\;                
            }
        }                
    }
\end{algorithm}

Dijkstra's algorithm performs a \emph{balanced} sequence of operations on the priority queue, i.e.
a sequence where the queue starts and ends empty \cite{cherkassky_buckets_1999}.
In particular, it performs initially $n$ operations of insert (lines \ref{line:insertv}, \ref{line:inserts}) and, consequentially, $n$ extract-min operations in line \ref{line:extract}.
Combined, the loops on lines \ref{line:loop_pq} and \ref{line:loop_out}, will iterate $O(m)$ times (each outgoing arc of each vertex) and therefore the number of decrease-key is at most $O(m)$.
Thus, the complexity of this algorithm is $O(nI + nX + mD)$, where $I, X$, and $D$ are the costs (in the worst case) per insert, extract-min, and decrease-key, respectively.

In the original implementation of Dijkstra's algorithm, the vertices are kept in an array. 
Thus, the insertion and decrease-key are performed in constant time while the extract-min costs $O(n)$ which is the time of scanning the whole array looking for the element with the smallest key. 
This gives a total complexity of $O(m + n^2) = O(n^2)$. Using a binary heap, the complexity drops to $O((n+m)\log(n))$ because all three operations are done in $\log(n)$ on this priority queue.
 
\section{The Dial’s algorithm and the 1-level bucket structure}\label{sec:dial}
The algorithm presented in this section was also credited to Loubar by Hitchner \cite{hitchner_comparative_1968}, who pointed to a reference from 1964.  
More recent references like \cite{goldberg_implementations_1997} and \cite{ferone_shortest_2017} refer to Dial's algorithm as an alternative implementation of Dijkstra's algorithm. 
However, in \cite{dial_algorithm_1969}, Dial originally built his algorithm as an implementation of Moore's \cite{moore_shortest_1959} algorithm. 
Moreover, from the five algorithms studied by Hitchner in \cite{hitchner_comparative_1968}, the one he attributed to Loubar is the only one that is not tagged as a Dijkstra variation.
A decade later, in \cite{dial_computational_1979}, this algorithm is identified as ``Dijkstra address calculation sort'' in a work of Dial himself along with Glover, Karney, and Klingman.
Back then, different ways of processing vertices were seen as different label-setting algorithms.
Over time most of those differences were abstracted by the priority queues, becoming what is called Dijkstra's algorithm today.
Thus, Dial's algorithm can indeed be seen as a version of Dijkstra's algorithm that uses an array of buckets as a priority queue.
In \cite{goldberg_implementations_1997}, Goldberg also mentions that the same data structure was independently proposed in \cite{wagner_shortest_1976} and  \cite{dinic_economical_1978}.

Dial's algorithm provides a good intuition on how bucket-based algorithms work in general, how they take advantage of the monotone structure of the problem as well as the fact that the arc weights are integers from which the maximum is known.

The key idea is to keep an array of buckets $\bb $ such that the bucket $\bb [i]$ will hold only vertices with distance $i$ from $s$, that is, vertices $v$ with $\ds{v} = i$.
Since a path can have at most $n-1$ arcs, the maximum cost of a path and therefore the largest possible distance from $s$ is $(n-1)C$.
Recall that $w(a)\in \{0, 1, \dots, C\}$ for all $a\in A(G)$, thus an array of $nC$ buckets is sufficient (actually $(n-1)C$ is enough, but $nC$ is more visually pleasant), as this is the number of possible distances.
This array will work as a monotone priority queue.
The algorithm keeps a pointer \act to the lowest-indexed non-empty bucket from which the next vertex to be scanned will be selected.
Observe that \act points to the bucket containing the vertices of minimum distance in $\bb $.
The bucket $\bb [\act]$ is called the \emph{active bucket} and when it gets empty, \act is incremented to the next non-empty bucket.
Once a vertex is scanned, it is checked whether the distance of its out-neighbors can be lowered.
In the affirmative case, ones already in $\bb$ must be relocated to the buckets relative to their new best distance and the others inserted in the appropriated buckets.
The pseudo-code of Dial's algorithm is presented in Algorithm \ref{algo:dial}

\begin{algorithm}
    \SetKwFunction{dial}{Dial}
    \SetKwFunction{Kwinsert}{insert}
    \SetKwFunction{relocate}{relocate}
    \SetKwFunction{Kwnext}{next}
    \SetKwFunction{pop}{pop}

    \caption{Dial's algorithm}\label{algo:dial}
    \SetAlgoLined
    \Indm\nonl\dial{$G=(V,A,w), s$} \\
    \Indp
    \lForAll{$v \in V(G)\setminus s$}{
        $\ds{v} = \infty,\, \pi(v) = \Kwnull$
    }
    $\ds{s} = 0; \pi(s) = \Kwnull$\;
    \act $\gets 0$ \;
    \Kwinsert{$s, \bb[\act]$}\;
    \While{$\act \leq Cn$}{
        $u\gets$ \pop{$\bb[\act]$}\tcc*{Return and remove an element of the bucket.}\label{line:pop}
        \ForAll{$v \in N^+(u)$}{
            \If{$\ds{v} > \ds{u} + w(uv)$}{\label{line:temp_label1}
                $\ds{v} \gets \ds{u} + w(uv)$ \;
                $\pi(v) = u$ \;
                \relocate{$v, \bb$}\tcc*{If $v$ is already in $\bb$, remove it from its previous bucket. Insert $v$ in $\bb[\ds{v}]$.}                
            }
        }        
        \If{$\bb[\act] = \emptyset$}{\label{line:empty}
            $\act \gets$ \Kwnext{$\act, \bb$} \tcc*{Return the index of next non-empty bucket or $\infty$ if all buckets larger than \act are empty.}\label{line:next}
        }
    }
\end{algorithm}

As mentioned earlier, Dial's technique for shortest paths provides an implicit monotone priority queue, which is precisely the array of buckets $\bb$.
To insert an element $v$ with key $k$ it suffices to call the insert function of the bucket $\bb[k]$. 
Each bucket $\bb[i]$ can be implemented, for example, as a double-linked list to allow insertion in constant time, the deletion can also be done in constant time if a reference to the node where each element is stored inside the bucket is kept.
The function \textit{relocate} used in Algorithm \ref{algo:dial} is equivalent to the decrease-key operation.
Since it is an insertion possibly preceded by a removal, it can be done in constant time.
The operation of extract-min is the combination of the extraction of an element from the active bucket index (as in Line \ref{line:pop}) and the possible update of the active bucket (Line \ref{line:next}).
In the case of update (Line \ref{line:empty}), by the monotone nature of the problem, one can consider only buckets of index higher than \act, this means that a linear search for a non-empty bucket ``on the right'' of \act suffices to implement the \textit{next} (Line \ref{line:next}) function of the Algorithm \ref{algo:dial}.
This priority queue will be called a \emph{1-level bucket structure} for further reference.
 
The Algorithm \ref{algo:dial} was stated this way for didactic purposes, but Dial's original technique uses fewer buckets.
Indeed, Algorithm \ref{algo:dial} can be easily modified to work only with $C+1$ buckets in $\bb$ instead of $nC$.
Given that scanned vertices are selected only from the active bucket (Line \ref{line:pop}), the range of labels in the structure is $[\act, \act+C]$, therefore at most $C+1$ consecutive buckets can be used at any time of the algorithm's execution.
Thus, to determine the position of a vertex inside $\bb$, one can use its temporary label modulo $C+1$ to ``wrap around'' when the end of $\bb$ is reached.
Namely, a vertex of label $k$ must be inserted into bucket $\bb[k\mod{C+1}]$.
The active bucket $\act$ update must also be modified to return to the beginning when it reaches $C$ and $\bb$ is not empty.
In this case, $\act$ will receive the index of the first non-empty bucket.

When using $nC$ buckets, as the vertex index in $\bb$ matches its label, it is possible to modify Algorithm \ref{algo:dial} so that the temporary labels are not explicitly stored.
For example, the test in Line \ref{line:temp_label1} could be replaced by $i_v> \act + w(uv)$, where $i_v$ is the index of $v$ in $\bb$. 
As mentioned earlier, storing $i_v$ is necessary to perform the operations delete and decrease-key in constant time.
This approach only makes sense if one is not interested in the cost or length of the shortest path but only in the path itself.
However, it highlights that the position of an element inside $\bb$ is uniquely determined by its key and vice-versa. 
The same does not happen in other data structures, such as a binary heap, because the other keys present in the data structure may also influence the insertion of a new element.
By reducing the number of buckets to $C+1$, it is necessary to keep extra information to deduce the label of a vertex by its index in $\bb$.
One way to do it is to maintain a counter $r$ of how many rounds the active bucket has made around $\bb$.
So, $r$ is set to $0$ at the beginning of the algorithm and is incremented each time $\act$ is updated to a smaller value.
This way, the label of a vertex is given by $rC + i_v$.
In the following sections, one may refer to the status of the data structure on \emph{round} $r$.

Assuming buckets insertions and deletions in constant time, insert and decrease-key also takes constant time.
The extract-min may cost a search in $O(C)$ buckets for updating the active bucket.
Thus, the complexity of Dial's algorithm as stated in Algorithm \ref{algo:dial} is $O(m + nC)$. 
A representation of the behavior of a 1-level bucket inside Dial's algorithm is shown in  Figure \ref{fig:one_level}.

\begin{figure}
    \centering
    \begin{subfigure}[t]{0.4\textwidth}
        \resizebox{\textwidth}{!}{\includegraphics{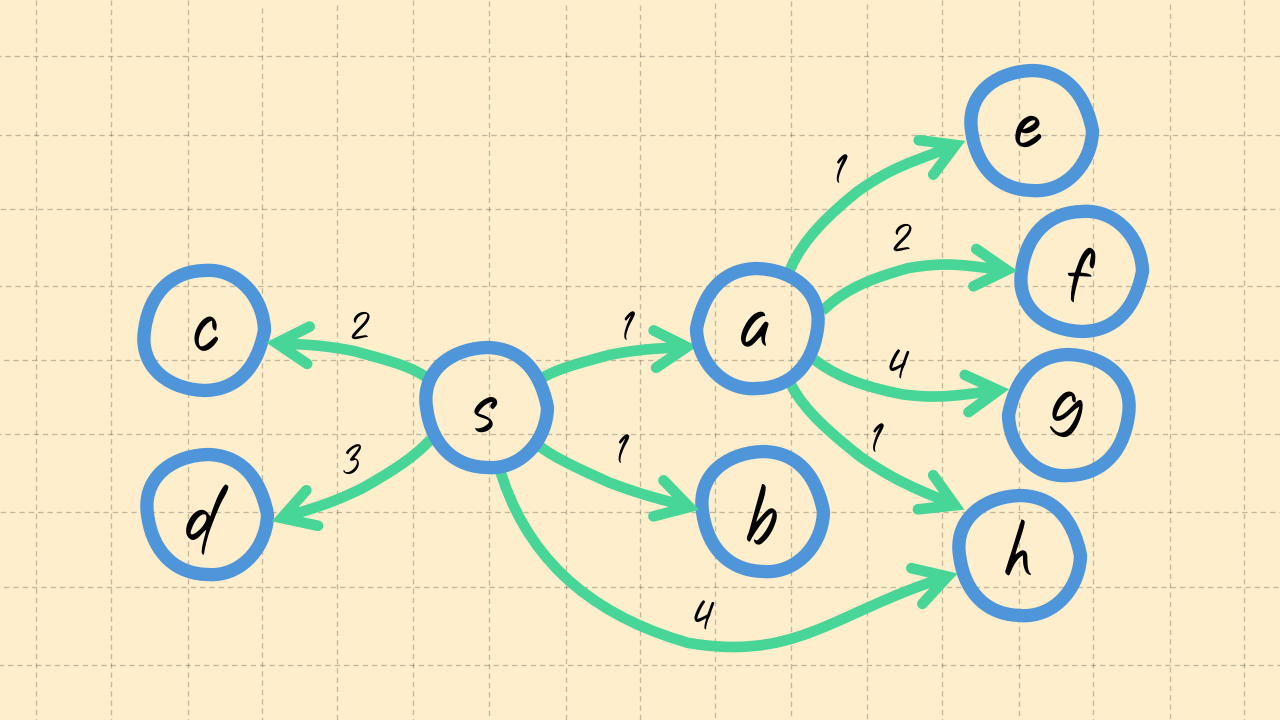}}
        \caption{Input graph}\label{fig:graph}
    \end{subfigure}
    
    \begin{subfigure}[t]{0.4\textwidth}
        \centering
        \resizebox{\textwidth}{!}{\includegraphics{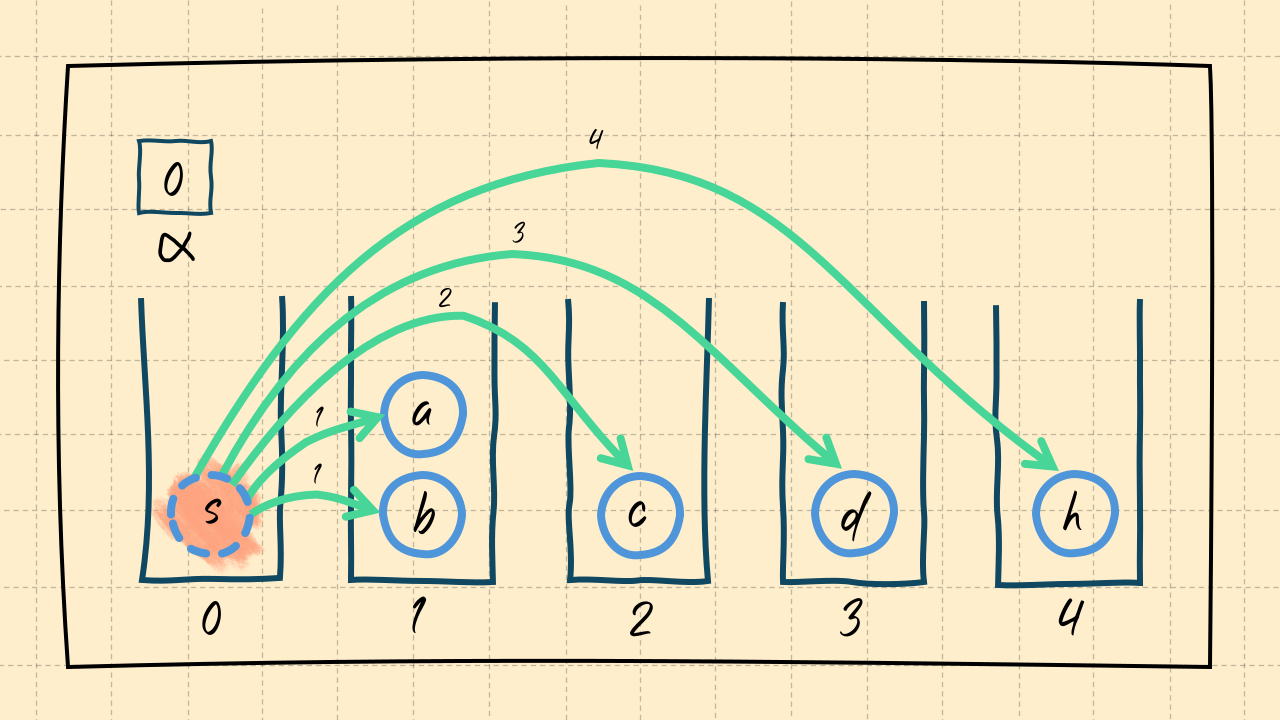}}
        \caption{First iteration}\label{fig:it1}
    \end{subfigure}
    \begin{subfigure}[t]{0.4\textwidth}
        \resizebox{\textwidth}{!}{\includegraphics{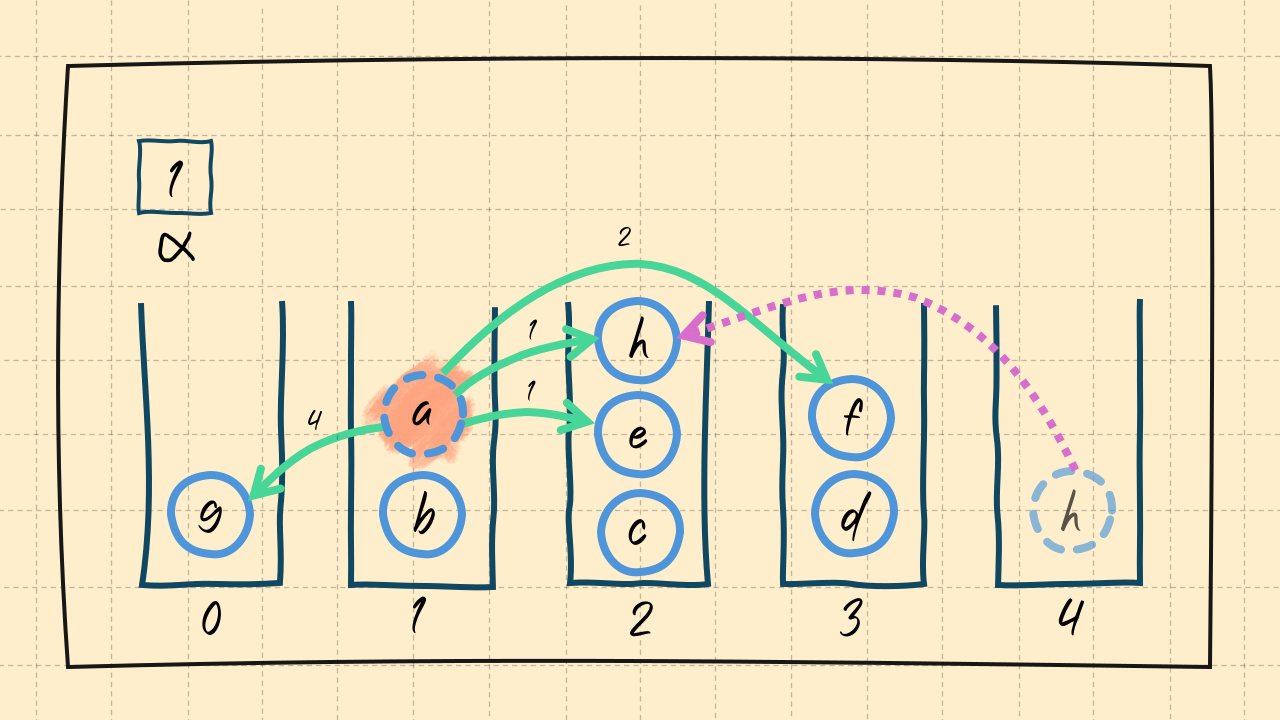}}
        \caption{Second iteration}\label{fig:it2}
    \end{subfigure}
    
    \caption{Example on how Dial's algorithm works with a 1-level bucket. The input graph with $C= 4$ is shown in (\subref{fig:graph}). The numbers around the arcs denote their weights.
    Outside the main loop, the source vertex $s$ was inserted in $\bb[0]$. In the first iteration of the loop, $s$ is removed and its out-neighbors are added into the proper buckets (\subref{fig:it1}). Another vertex is selected in the following step and its neighbors are also added in \bb. The vertex $h$ is relocated from $\bb[4]$ to $\bb[2]$ (\subref{fig:it2}).}
    \label{fig:one_level}
\end{figure}

\section{The multi-level bucket structure}\label{sec:multi}
In 1973, Gilsinn and Witzgall \cite{gilsinn_performance_1973} did a performance comparison on labeling algorithms for shortest paths.
They considered three techniques to improve the basic label correcting method and four improvements to the basic label setting method.
From these seven, Dial's algorithm showed the best performance, often by large, as pointed out by the authors.
Aiming to extend this study, Dial \textit{et al.} \cite{dial_computational_1979} evaluated other procedures for the same problem. 
But, in contrast with the previous work, they found different algorithms performed best under varying density scenarios.
They considered five label-correcting and four label-setting methods.
This time, the label-setting algorithms are Dial's algorithm and three modifications of it.
The first is rather an improvement on when to add the nodes in the data structure. 
They observed that, when scanning the node $u$, it is not necessary to add all vertices in $N^+(u)$ in the priority queue but only one. Namely, the vertex $v$ such that $\ds{v}>\ds{u}$ and $w(uv)$ is minimum. 
Based on this fact, the algorithm was modified to postpone the inclusion of vertices into $\bb$.

The other two improvements added changes to the 1-level bucket structure and were built upon the first modification.
To decrease the number of empty-buckets scans, they partitioned $\bb$ into segments of equal size and added a counter for the number of non-empty buckets in each segment.
Then, to update the active bucket these counters could be used to skip entire segments of empty buckets.
However, Dial \textit{et al.} observed that this strategy did not improve the performance of the algorithm. 
They also noted that, in their experiments, the non-empty buckets were approximately uniformly distributed along $\bb$, probably because $w$ was generated by a uniform probability distribution.
This means that for instances where $w$ is not random, this still might be a profitable strategy.
The second change consists of grouping each one of these segments into a ``larger'' bucket. 
This adjustment decreases the number of buckets but causes vertices with different distances to fall inside the same bucket.
To deal with this, when a bucket becomes active, its vertices are sorted by their distances.

\subsection{2-levels of buckets}\label{sec:2lb}
Denardo and Fox \cite{denardo_shortest-route_1979} proposed stronger structural modifications on Dial's 1-level bucket structure by creating a level hierarchy of buckets. 
Their data structure is suited to work with floating-point keys, for more general shortest paths scenarios.
Goldberg and Silverstein revisited this data structure in \cite{goldberg_implementations_1997} and performed empirical experiments comparing the data structure's performance for different numbers of levels, but their implementation was restricted to integer keys. 

From a conceptual perspective, a 2-level bucket structure works as follows: the top level is an array of $\sqrt{C+1}$ buckets, and each top-level bucket comports a range of $\sqrt{C+1}$ distances.
The size of the range of labels that fits into the bucket is called the \emph{width} of that bucket.\footnote{The original description by Denardo and Fox \cite{denardo_shortest-route_1979} is suited to handle a more general distribution of buckets and widths.}
Moreover, inside a top-level bucket, there are $\sqrt{C+1}$ bottom-level buckets, each of width one. i.e., holding a single distance label. This structure is illustrated in Figure \ref{fig:mlb-hl}.
To keep track of the active bucket $\act_t$ is the top-level index while $\act_b$ is the bottom-level index, that is,  if $\bb$ is the whole structure, then $\bb[\act_t][\act_b]$ is the active bucket.
To find a position of a new element with key $k$ inside $\bb$, one must compute its top-level index $i = \lfloor k/\sqrt{C+1}\rfloor\mod{\sqrt{C+1}}$ and its bottom-level index $j= k \mod{\sqrt{C+1}}$.
\begin{figure}
    \centering
    \includegraphics[width = 0.7\textwidth]{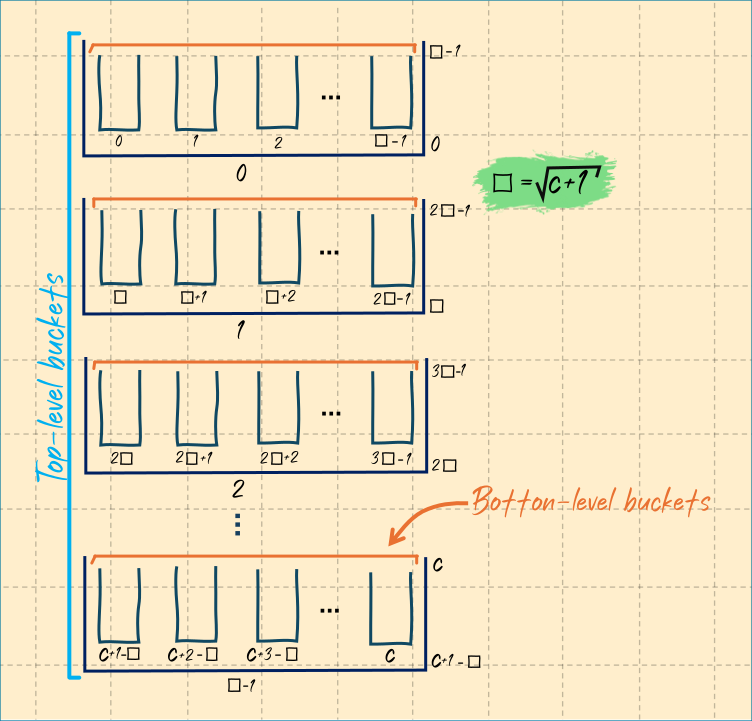}
    \caption{Conceptual view of a 2-level bucket structure. The numbers on the right side of each top-level bucket stand for its range.}
    \label{fig:mlb-hl}
\end{figure}

Stated this way, the 2-level bucket structure has no important advantage over the one-level version because it has the same number of buckets.
One can decrease this number by keeping only one array of bottom-level buckets.
This is reasonable because grouping the vertices with the same distance inside the same bucket is useful only when they are about to be extracted from the structure (by the extract-min operation).
Furthermore, the vertices with minimum distance in $\bb$ are always in the buckets of $\bb[\act_t]$.
Thus, $\bb$ can be modified to work with only two arrays $\bb_t$ and $\bb_b$ of $\sqrt{C+1}$ buckets, i.e., top and bottom level, respectively.
The bottom level $\bb_b$ is associated with the active top-level $\bb_t[\act_t]$.
To find the appropriate place of a new vertex of key $k$, its indexes $i$ and $j$ are computed as before.
If $i=\act_t$, it is inserted directly in $\bb_b[j]$, otherwise, it is inserted in $\bb_t[i]$.
When the bottom level gets empty, $\act_t$ must be updated to the index $i$ of the next non-empty top-level bucket, then the elements of $\bb_t[i]$ will be distributed among the buckets of the bottom level, and $\act_t$ is set to $i$. 
This procedure is called \emph{expansion}.
\begin{figure}
    \centering
    \includegraphics[width = .7\textwidth]{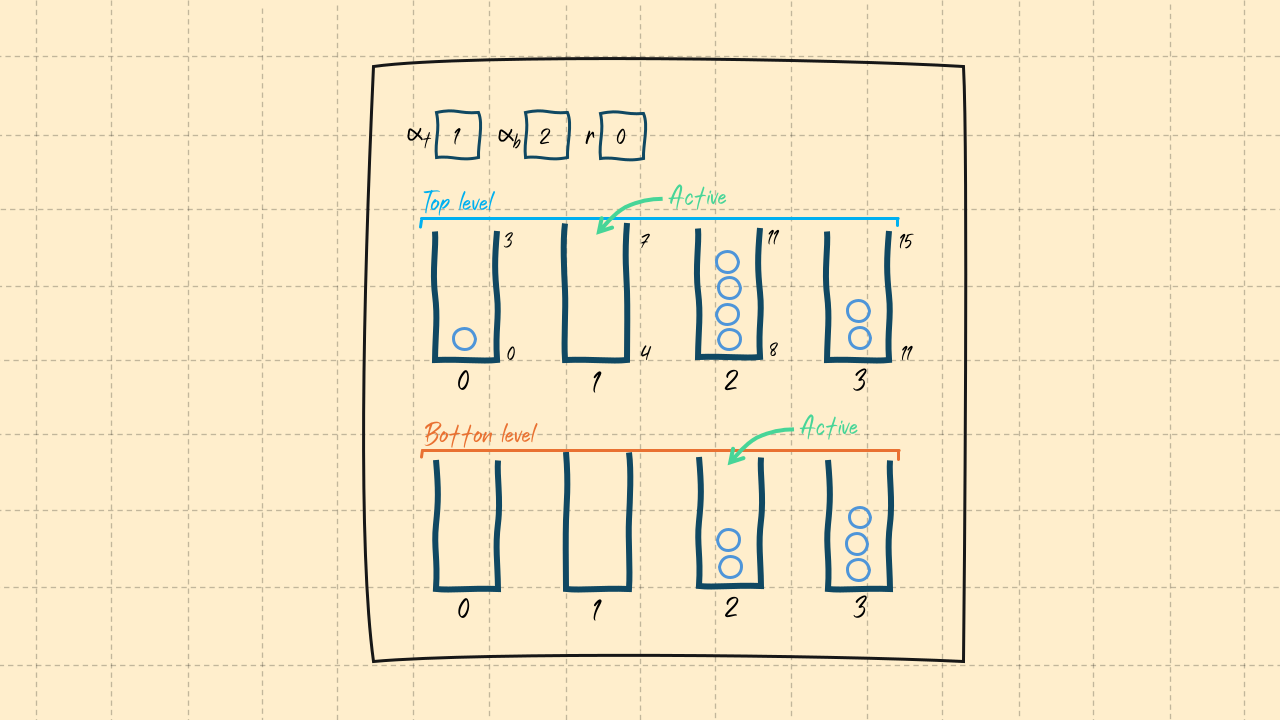}   
    \caption{The two levels of a 2-level bucket implementation for $C = 15$. The numbers on the right side of each top-level bucket stand for its range.}
    \label{fig:2-level}
\end{figure}

With this scheme, the number of buckets is reduced to $2\sqrt{C+1}$.
The operations insert, remove, and decrease-key can also be performed in constant time. 
Moreover, the worst-case complexity for the extract-min operation is now $O(\sqrt{C+1})$.
This is because after an extract-min it may be necessary to update the indexes of the active buckets, which in the worst case will cause a linear search in nearly all buckets.
Recall that each top-level bucket covers a range of distances $\sqrt{C+1}$.
Therefore, each empty scan on the top level of a 2-level bucket structure skips $\sqrt{C+1}$ empty scans from its 1-level counterpart.

For $i \in \{0, 1, \dots, \sqrt{C+1} - 1\}$, the range of $\bb_t[i]$ on round $r$ is $[i\sqrt{C+1}+r\sqrt{C+1}, (i+1)\sqrt{C+1}-1+r\sqrt{C+1}]$ while the range of $\bb_b[j]$ is the label $\act_t\sqrt{C+1} + r\sqrt{C+1} + j$.
If $i<\act_t$, that is, $\bb_t[i]$ is on the left of the active bucket, instead of $r$ one must consider $r+1$.

The total time of the Dijkstra algorithm using this data structure is $O(m+n(1+\sqrt{C}))$.

\subsection{k-levels of buckets}
Denardo and Fox \cite{denardo_shortest-route_1979} also extended the 2-level implementation to a more general nested structure with $k\geq 3$ levels.
It can also be seen as a $k$-dimensional matrix of buckets, as in the case of two levels presented earlier, but, as before, the implementation only keeps two dimensions.  

The structure $\bb$ have $k$ levels $\bb_{0}, \bb_{1}, \dots, \bb_{k-1}$, where each level $\bb_i$, for $i\in \{0, 1, \dots, k-1\}$, is an array of $d = \lceil {(C+1)}
^{1/k} \rceil$ buckets.
The top-level is $\bb_{k-1}$ while the bottom-level is $\bb_0$. 
On each level, all buckets have the same width which narrows from top to bottom.
Namely, the width on level $i$ is $d^i$. 
In particular, the top and bottom levels have widths $d^{k-1}$ and $1$, respectively.
Associated with each level $i$, there is also the index $\act_i$ indicating the current active bucket of that level. 
The buckets on level $j$ correspond to the bucket $\act_{j+1}$ expanded, for every $j\in \{0, 1, \dots, k-2\}$.

The lower bound of the level $i$ in the round $r$, given by $L_r(i)$, is the smallest label that can be stored in that level on this particular round. 
Formally, $L_r({k-1}) = rd^{k-1}$ and $L_r({i-1}) = L_r({i}) + \act_{i}d^i$, for $0\leq i\leq k-2$. 
In turn, the range of bucket $\bb_i[j]$ is $[L_r(i) + jd^i, L_r(i) + (j+1)d^i-1]$, for $0 \leq j\leq d-1$.
Similarly, one can define the upper bound of the level as $U_r(k-1) = L_{r+1}(k-1) - 1$ and $U_r(i-1) =  L_r({i}) + {(\act_i+1)}d^i- 1$.
The following proposition enlightens the meaning of the above ranges and bounds.

\begin{proposition}\label{prop:bounds}
    In any round $r\geq 0$, the range of the active bucket in level $i$ corresponds to the bounds of level $i-1$, for every level $i \in \{1, 2, \dots, k-1\}$.
\end{proposition}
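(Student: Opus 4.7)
The plan is to prove Proposition \ref{prop:bounds} by a direct unfolding of the recursive definitions of $L_r$ and $U_r$ given immediately above the statement; no induction is needed because the proposition only relates a level to its immediate successor in the hierarchy.

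First I would fix an arbitrary round $r \geq 0$ and an arbitrary level $i \in \{1, 2, \dots, k-1\}$, and then specialize the previously stated range formula for $\bb_i[j]$ at $j = \act_i$. This shows that the active bucket of level $i$ covers the integer interval $[L_r(i) + \act_i d^i,\; L_r(i) + (\act_i+1)d^i - 1]$.

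Next, I would identify the two endpoints with the bounds of level $i-1$. Since $i-1 \in \{0, 1, \dots, k-2\}$, the recursions $L_r(i-1) = L_r(i) + \act_i d^i$ and $U_r(i-1) = L_r(i) + (\act_i+1)d^i - 1$ both apply; they immediately identify the lower endpoint of the interval above with $L_r(i-1)$ and the upper endpoint with $U_r(i-1)$, yielding exactly the claimed equality of ranges.

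The argument reduces to bookkeeping of the definitions, so no real obstacle is expected. The only subtlety is to verify that the indices (both the active-bucket position $\act_i$ and the round counter $r$) align on the two sides of each recursion, and that the quoted recursions cover the full range of $i$ considered in the statement.
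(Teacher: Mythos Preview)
Your proposal is correct and follows essentially the same approach as the paper's proof: both simply unfold the recursive definitions of $L_r(i-1)$ and $U_r(i-1)$ and match them against the range formula for $\bb_i[j]$ specialized at $j=\act_i$. The paper compresses this into a single line, while you spell out the bookkeeping more carefully, but the argument is identical.
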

\begin{proof}
    By definition, $[L_r(i-1), U_r(i-1)] = [L_r({i}) + \act_{i}d^i, L_r({i}) + {(\act_i+1)}d^i- 1]$ which is exactly the range of $\act_i$.
\end{proof}

With more levels, inserting a new element is less straightforward because one must first determine the correct level.
In the initial state, $\bb$ is empty and has $r=0$.
Suppose an element $x$ with key $0$ (a reasonable assumption for a label setting algorithm) was the first to be inserted, and that no extract-min or decrease-key was performed so far.
In this scenario, observe that $\act_i = 0$ for every level $i$.
Now, consider the insertion of $x'$ with key $y$.
Let $i$ be the minimum level such that $y\in [L_0(i), U_0(i)]$.
The claim is that $x'$ must be inserted in $\bb_i[\lfloor y/d^i\rfloor]$, recalling that $d^i$ gives the width of level $i$.
Clearly, $x'$ cannot be inserted in a lower level since $y$ does not match the bounds and, if $i$ is not the top level, by Proposition \ref{prop:bounds} $y$ is in the range of $\bb[\act_{i+1}]$, which is expanded at level $i$.
Either way, $i$ is the correct level.
Let $j$ be the index such that $y$ must inserted into $\bb_i[j]$.
By definition, the range of $\bb_i[j]$ is $[L_r(i) + jd^i, L_r(i) + (j+1)d^i-1]$ which is $[jd^i, (j+1)d^i-1]$ since $L_r(i) = 0$. 
Thus, if $y\in [jd^i, (j+1)d^i-1]$ , then $j = \lfloor y/d^i\rfloor$.

After some operations of extract-min, active buckets may be incremented, pushing the bounds of the levels. 
But, as the widths and number of buckets on each level remain the same, bounds are always updated by multiples, so it is easy to find the appropriate index.
In general, $x'$ must be inserted in $\bb_i[j]$, where $i = \min\{ z\, | y \in [L_r(z), U_r(z)],\, z \in \{0, 1, \dots,  k-1\}\}$ and $j = \lfloor \frac{y - L_r(i)}{d^i} \rfloor$. 
With a linear search, finding $i$ costs $O(k)$ in the worst case (more efficient methods for finding $i$ will be discussed later). 
Decreasing the key of an element costs a constant time if it remains at the level and it $O(k)$ if a linear search for its proper level is necessary. 
Note that the decrease-key can only cause elements to descend in the levels of $\bb$.

All the extract-min operations should occur at the bottom level. 
The procedure removes and returns an element of $\bb_0[\act_0]$. 
If $\bb_0[\act_0]$ is empty, $\act_0$ will receive the index of the next non-empty bucket in $\bb_0$ or an expansion must be performed.
The expansion on a $k$-level bucket structure is the generalization of the precedent.
One must first determine the index $i$ of the lowest non-empty level.
To expand level $i$, $\act_i$ is updated with the index of the first non-empty bucket at level $i$ (since all levels below $i$ just became empty, certainly $j > \act_i$).
Next, distribute the elements from $\bb_i[\act_i]$ into the appropriate buckets of level $i-1$ while updating $\act_{i-1}$ to the lowest index used in level $i-1$. 
Continue this process until reaching the level 0, distributing elements and updating active buckets at each level. 
After the expansion,  level 1 is non-empty, and the extract-min can occur normally. 
Considering the abstract notion of $\bb$ as a $k$-dimensional matrix, the expansion procedure is analogous to finding the bucket with the smallest elements. 
Finding the level $i$ for starting the expansion may cost $O(k)$ with extra $O(d) = O(C^{1/k})$ to find the first non-empty bucket inside $\bb_i$ resulting in the total time of $O(k+C^{1/k})$ for the extract-min operation.

The total time of Dijkstra's algorithm with this $k$-level buckets structure is $O(km + n(k+C^{1/k}))$ and it uses approximately $kC^{1/k}$ buckets.
As shown in \cite{denardo_shortest-route_1979}, this bound can be improved to $O(m\log(k) + n(\frac{k+C^{1/k}}{\mathcal{W}})$, where $\mathcal{W}$ is the size of a machine word. 
This enhanced bound results from more sophisticated techniques to find non-empty levels and buckets. 
By storing the levels' breakpoints in an array, it is possible to use a binary search to find the correct level for an insertion; reducing the insertion time to $O(log(k))$. 
To speed up the expansion one can use a binary $k$-vector with `1' entries representing non-empty levels. 
With appropriate bit operations, one can find the first non-zero entry in a machine word in constant time and therefore the first non-empty level in $O(k/\mathcal{W})$ time ($O(1)$ if $k$ fits in a machine word).
Applying the same strategy the first non-empty bucket is found in $O(C^{1/k}/\mathcal{W})$.

\subsection{Hot queues}\label{sec:hq}
In \cite{cherkassky_buckets_1999}, Cherkassky \textit{et al.} introduced what they called \emph{heap-on-top priority queues}.
This data structure is a combination of a multilevel bucket priority queue $\bb$ and a heap $\mathcal{H}$. 
The best bounds of time for Dijkstra's algorithm with hot queues are obtained when $\mathcal{H}$ is a monotone s-heap, i.e., a priority queue in which the complexity of the operations grows with the number of elements inside the structure. 
In particular, these bounds are $O(m+n\log({C})^\frac{1}{2})$ when $\mathcal{H}$ is a Fibonacci heap \cite{fredman_fibonacci_1987}, and $O(m + n\log(C)^{\frac{1}{3}+\epsilon})$ using the Thorup's heap of \cite{thorup_ram_1996} instead.
The last assumes a stronger RAM model.\footnote{\textcolor{black}{The standard RAM model provide a basis for assessing the efficiency of an algorithm. It defines basic operations like arithmetic (addition, subtraction, multiplication), data manipulation (load, store, copy), and control flow (branching) that execute in constant time ($O(1)$). This model mirrors real-world computing capabilities and is widely used in algorithm design. Stronger RAM models extend this by including operations such as bitwise logic and specialized bit manipulations, making it more theoretical and extending the range of operations beyond practical computer architectures \cite{cormen2022introduction} \cite{van1991handbook}.}}
\citeauthor{raman_recent_1997} argues in \cite{raman_recent_1997} that this bound can be reduced to $O(m + n\log(C)^{\frac{1}{4}+\epsilon})$ by using an stronger result of \cite{thorup_ram_1996}.

Cherkassky \textit{et al.} were not only interested in a data structure with lower theoretical bounds but also in one that could work well in practice. 
Using a stronger RAM model, they give another description of the $k$-level bucket structure. 
In their description, there is no need to keep the bounds of the levels because they manage to compute the level of a new element with bit operations between its key and the key of the last minimum extracted.
This reduces the insertion and decrease-key times to $O(1)$ and therefore the total time to $O(m+ nC^{\frac{1}{k}})$.

The main structural change added to the hot queues, concerning its precedent, is designed to minimize empty scans after the expansion of a bucket with few elements. 
For example, consider the expansion of the bucket $\bb_i[\act_i]$ of a $k$-level bucket structure $\bb$ during one execution of Dijkstra's algorithm.
Before the expansion, all levels below $i$ (if $i>0$) are empty.
If the number of elements in $\bb_i[\act_i]$ is small, some levels below $i$ will be empty or nearly empty.
This scenario may lead to successive expansions which is bad because, in terms of time complexity, an expansion is the most expensive procedure in a multi-level bucket.
The heap $\mathcal{H}$ is employed to prevent this kind of issue.

Let $\bb_{\ell}[\alpha_\ell]$ denote the bucket containing the element with the smallest key in $\bb$, and let $c(B)$ denote the number of elements inside the bucket $B$. 
Whenever $c(\bb_{\ell}[\alpha_\ell]) \leq t$ and $\ell > 0$ (the buckets in level $0$ do not require expansions), instead of expanding $\bb_{\ell}[\alpha_\ell]$ its elements are duplicated in $\mathcal{H}$, where $t$ is a predefined threshold.
Any operation that should happen in the range of $\bb_{\ell}[\alpha_\ell]$ will be performed both in the bucket and in $\mathcal{H}$, ``freezing'' the buckets below $\bb_{\ell}[\alpha_\ell]$. 
On the other hand, inserts and decrease-keys in buckets above $\bb_{\ell}[\alpha_\ell]$ can occur normally. 
When eventually $\bb_{\ell}[\alpha_\ell]$ gets large enough, \textit{i. e.}, $c(\bb_{\ell}[\alpha_\ell])> t$, it is expanded, $\mathcal{H}$ is emptied, and the structure returns to operate as a regular $k$-level bucket structure. 
The number of elements in $\mathcal{H}$ is always limited this is why using a s-heap is preferable. 

The bounds mentioned at the beginning of this section are obtained by properly choosing $\mathcal{H}$ and $t$ as Cherkassky \textit{et al.} proved the following:

\begin{theorem}{\cite{cherkassky_buckets_1999}}\label{thm:bounds-hq}
    Let $I^\mathcal{H}(N), D^\mathcal{H}(N)$ and $X^\mathcal{H}(N)$ the time bounds for the operations of insert, decrease-key, and extract-min in $\mathcal{H}$.
    Then, for a balanced sequence of operations, the amortized bounds for the hot queue are:
    $O(I^\mathcal{H}(t))$ for insert, $O(D^\mathcal{H}(t)+I^\mathcal{H}(t))$ for decrease-key, and $O(k+ X^\mathcal{H}(t)+\frac{kC^{1/k}}{t})$ for the extract-min. 
\end{theorem}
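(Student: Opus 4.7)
The plan is to analyze each of the three operations separately, treating \texttt{insert} and \texttt{decrease-key} directly and reserving an amortized/accounting argument for \texttt{extract-min}. Throughout, I will rely on the $O(1)$ level-determination trick from Cherkassky \textit{et al.} (comparing a key against the last extracted minimum with bitwise operations in the stronger RAM model), so that the bucket portion of any single operation costs $O(1)$ plus whatever the heap contributes, and I will use the invariant that $\mathcal{H}$ always mirrors the contents of $\bb_\ell[\alpha_\ell]$ whenever that bucket is ``frozen''. A pointer between each element's copy in $\bb$ and its copy in $\mathcal{H}$ will let us keep these two copies in sync in $O(1)$.

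For \texttt{insert} of an element with key $y$, we compute its level $i$ and index $j$ in $O(1)$ and place it in $\bb_i[j]$; if $y$ lies in the range that $\mathcal{H}$ currently mirrors, we additionally insert into $\mathcal{H}$ at cost $O(I^{\mathcal{H}}(t))$, giving the stated bound. For \texttt{decrease-key}, the element either stays at its level or descends; the in-bucket update is $O(1)$, and the heap-side update is an $O(D^{\mathcal{H}}(t))$ decrease-key if the element was already in $\mathcal{H}$, or an $O(I^{\mathcal{H}}(t))$ insert if the new key just entered the hot range, yielding $O(D^{\mathcal{H}}(t)+I^{\mathcal{H}}(t))$. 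Here it is important that \texttt{decrease-key} in $\bb$ never forces an expansion, because expansions are triggered only by \texttt{extract-min}, so the cost really is per-operation and not amortized.

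The main work is the \texttt{extract-min} bound, which I would prove by the accounting method. When $\mathcal{H}$ is non-empty the minimum lies there, so we pay $O(X^{\mathcal{H}}(t))$ and use the pointer to delete the twin in $\bb$ in $O(1)$; when $\mathcal{H}$ is empty, we extract from $\bb_0[\alpha_0]$ directly in $O(1)$. The additive $O(k)$ term covers locating the new lowest non-empty level after a bucket empties, using the bitmap-of-non-empty-levels trick already used in the $k$-level analysis. The delicate part is the $O(kC^{1/k}/t)$ term: deposit a credit of $\Theta(kC^{1/k}/t)$ on every element at insertion time. An expansion of a bucket $\bb_\ell[\alpha_\ell]$ is only performed when $c(\bb_\ell[\alpha_\ell])>t$, and its total cost, including the cascade of active-bucket searches down to level $0$, is $O(kC^{1/k})$ using bit tricks plus $O(c)$ for moving elements; the $t{+}1$ elements that triggered it then fully pay for the $O(kC^{1/k})$ scanning part, contributing exactly $O(kC^{1/k}/t)$ per extract-min. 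The freezing step, which inserts at most $t$ elements into $\mathcal{H}$ at cost $O(t\cdot I^{\mathcal{H}}(t))$, is charged to the (at most $t$) subsequent heap-based extractions, each absorbing an extra $O(I^{\mathcal{H}}(t))=O(X^{\mathcal{H}}(t))$ term already present in the bound.

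The main obstacle, as I see it, is checking that the credit scheme remains consistent across repeated freeze/expand cycles: a single element may be pushed into $\mathcal{H}$ and later released back into the bucket hierarchy as its bucket grows past $t$, and this can happen only boundedly often (at most once per element, since the hot range is strictly monotone), but verifying that the credits deposited at insertion time suffice to absorb every such future freeze \emph{and} every future expansion in which the element participates is the nontrivial bookkeeping that the balanced-sequence hypothesis is needed for.
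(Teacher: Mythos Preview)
The paper does not give a proof of this theorem: it is stated with the citation \cite{cherkassky_buckets_1999} and immediately followed by the next subsection, so there is nothing in the paper to compare your attempt against. Your sketch therefore already goes beyond what the survey offers.

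On the substance, the treatment of \texttt{insert} and \texttt{decrease-key} is correct, and a credit argument is indeed how Cherkassky \textit{et al.} handle \texttt{extract-min}. Two points would need repair before the argument closes. First, you charge the full cascade cost $O(kC^{1/k})$ to the $t{+}1$ elements in the topmost bucket of the cascade, but an element can sit in the active bucket at several successive levels of the same cascade and hence be charged repeatedly; over its lifetime it may incur up to $k$ such charges, overrunning its $\Theta(kC^{1/k}/t)$ credit by a factor of $k$. The fix is to account level by level: each single-level expansion scans $O(C^{1/k})$ positions and moves $>t$ elements down one level, so each element pays $O(C^{1/k}/t)$ per level it descends, and with at most $k$ descents the deposit is exactly enough. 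Second, the claim that an element is pushed into $\mathcal{H}$ ``at most once \ldots\ since the hot range is strictly monotone'' is not correct: after a frozen bucket at level $\ell$ overflows and is expanded, the same element may land in a new frozen bucket at level $\ell{-}1$ and be reinserted into $\mathcal{H}$, and this can recur once per level. Consequently the $O(t\cdot I^{\mathcal{H}}(t))$ cost of a freeze cannot always be charged to subsequent heap extractions (there may be none before the bucket overflows); it must instead be absorbed by the inserts and decrease-keys that pushed the bucket past the threshold, or by the per-level credit. You correctly flag this bookkeeping as the crux; the missing ingredient is precisely this per-level rather than per-cascade decomposition.
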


\subsection{Other versions}
In \cite{andersson_pragmatic_nodate}, \citeauthor{andersson_pragmatic_nodate} implemented a monotone priority queue as part of the DIMACS implementation challenge.
From a high-level perspective, their structure can be viewed as a 4-level bucket structure.
They optimize it to work with a fixed number of levels given that the keys are 64 bits long.
A major difference is that they keep a sorted list with a subset of the elements in the data structure. The idea here is also to avoid expansions of buckets with few elements (as $\mathcal{H}$ in the hot queue).
The minimum element will always be in this list, and since it has a fixed maximum size, keeping it sorted takes constant time.
This priority queue performed very well in the reported empirical evaluation being substantially faster than a standard heap in most of the problem instances. 
\citeauthor{andersson_pragmatic_nodate} left the code of their implementation in the appendix of \cite{andersson_pragmatic_nodate}.

\citeauthor{goos_simple_2001} proposed in \cite{goos_simple_2001} a label-setting method for the SPP  that he called later, in \cite{goldberg_shortest_2001, goldberg_practical_2008}, the smart queue algorithm.
Unlike Dijkstra's algorithm, the smart queue will not necessarily scan the vertex with minimum temporary distance as \citeauthor{goos_priority_1996} showed how to detect vertices for which the temporary distance cannot decrease.
The algorithm keeps those vertices in a list $F$ scanning them first and in any order.
The vertices not satisfying such condition are kept in a multi-level bucket structure \bb as usual. 
If $F$ is empty, the minimum element of $\bb$ is extracted and scanned as usual.
In \cite{goos_simple_2001} he showed that this algorithm has a linear average case. 
With empirical experiments he verified in  \cite{goldberg_shortest_2001} and \cite{goldberg_practical_2008} that the smart queue ran,  in the worst case, 2.5 slower than a BFS (breath-first search).
He also performed experiments with Dijkstra's algorithm running with a multi-level bucket.
The latter also showed a performance close to the BFS even with a large number of levels.
An interesting improvement that can be implemented in any of the priority queues presented in this section is what is called in \cite{goldberg_shortest_2001} the wide bucket heuristic.
Let $\ell=\min\{w(a): a \in A(G)\}$ be the minimum arc weight of the graph.
Then, for $0<p\leq \ell$, the multi-level bucket structure works correctly if the width of the buckets is multiplied by $p$ on every level.

\section{The radix-heap}\label{sec:radix}
In 1990, Ahuja, Mehlhorn, Orlin, and Tarjan \cite{ahuja_faster_1990} proposed a monotone priority queue they called \emph{radix heap} and their data structure improved one proposed by D. B. Johnson in 1977. 
In reference to this work, Denardo and Fox affirm that a preprint of \cite{denardo_shortest-route_1979} stimulated Johnson to build a variant of their multilevel bucket system.
The complexity of Dijkstra's algorithm with Johnson's data structure is $O(m\log\log(C) + n\log(C)\log\log(C))$ as indicated in \cite{denardo_shortest-route_1979} and \cite{ahuja_faster_1990}.
Ahuja \textit{et. al.} present three versions of radix heaps: one-level radix heaps, two-level radix heaps, and two-level radix heaps with Fibonacci heaps.
The running times of Dijkstra's algorithm with these data structures are respectively: $O(m + n\log(C)), O(m+n\log(C)/\log\log(C))$ and $O(m + n\sqrt{\log(C)})$.

To build radix heaps Ahuja \textit{et. al}. took particular advantage of the following properties of Dijkstra's algorithm: $\ds{u} \in \{0, 1, 2, \dots, nC\}$; and $\ds{v} \in \{\mu, \mu+1, \dots, \mu+C\}$, where $\mu$ is the label of the last scanned vertex and  $v$ is a vertex in $Q$ (see Algorithm \ref{algo:Dijkstra}) with non-infinity label. 
They remark that the last implies that successive extract-min operations return vertices with non-decreasing keys.

\subsection{One-level radix heaps}
A one-level radix heap is an array $\bb$ with $k = \lceil\lg(C+1)\rceil+2$ buckets which will be indexed from $1$ to $k$.
Let $|\bb[i]|$ denote the width of bucket $\bb[i]$.
Then, $|\bb[1]| = 1$, $|\bb[i]| = 2^{i-2}$ for $i \in \{2, 3, \dots, k-1\}$, and $|\bb[k]| =nC + 1$.
This way, it follows that $\sum_{j=1}^{i-1} |\bb[j]| \geq \min\{|\bb[i]|, C+1\}$, for every $i\in \{2, 3, \dots, k\}$. 
That means that the buckets before $\bb[i]$ have enough space to comport all different labels in $\bb[i]$ (or $C+1$ labels when $i=k$).

The bucket ranges are chosen to partition the interval $[\mu, \mu+1, \dots, n(C+1)]$, recalling that $\mu$ denotes the label of the last scanned vertex and therefore is initially 0.
For each $i\in\{1, 2, \dots, k\}$, the range of $\bb[i]$ is $[U(i-1)+1, U(i)]$ with the convention that $U(0)=\mu-1$.
As in the multi-level bucket structure, these ranges determine which labels can be inserted on each bucket and they vary during the algorithm's execution while the widths are fixed.
However, here, the size of the range of a bucket and its width does not necessarily match\footnote{In this case, the term width is an abuse of notation.}.
It is assumed, that the vertex $s$ (with label $0$) will be inserted in the first bucket, then initially the bounds are set as $U(i) = 2^{i-1} - 1$ for $i\in \{1, 2, \dots, k-1\}$ and $U(k) = nC +1$. 
In Figure \ref{fig:1-level_radix} there is a representation of a radix heap for $C=15$.
\begin{figure}
    \centering
    \includegraphics[width = \textwidth]{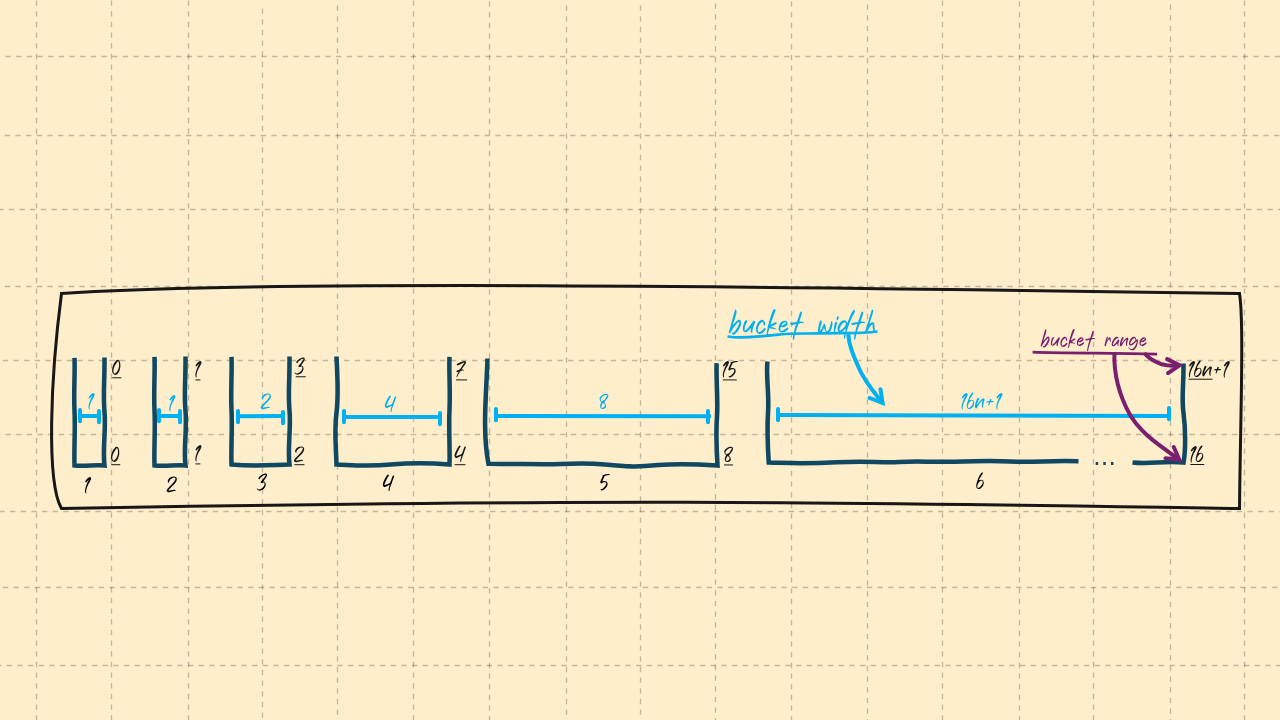}   
    \caption{Representation of a one-level radix heap at the initial state assuming $C=15$.}
    \label{fig:1-level_radix}
\end{figure}

In a 1-level bucket structure (see Section \ref{sec:dial}), all buckets have width 1, and the ``head'' of the structure (the active bucket) moves towards the lower key elements.
In contrast, in a one-level radix heap, the lower key elements will be moved toward the ``head'' of the structure (the first bucket) and the buckets get narrower as they get closer to the head.
So, the idea is to keep the elements with minimum label in $\bb[1]$ and the algorithm must properly adjust the bucket ranges to maintain this property.

The insertion of an element $x$ with key $y$ proceeds as follows: starting from $i=k$ iterate, in decreasing order, until finding the first $i$ such that $U(i)< y$, then insert $x$ in $\bb[i+1]$.
Observe that  $\bb[i+1]$ is the lowest bucket in which $x$ can be inserted.
Even if initially there is only one possible bucket for the insertion, eventually the structure may have buckets with overlapping ranges, and therefore one must ensure that the new element is placed in the lowest appropriate bucket.
The time spent with an insertion is $O(k) = O(log(C))$.

To perform the decrease-key of an element $x'$ to a new key $y'$, one should find the index $j$ of the bucket in which $x'$ is stored; remove $x'$ from $\bb[j]$; change its key to $y'$, and proceed as regular insertion but starting from $i=j$.
As before, the first stage can be done in constant time if each element stores its index in $\bb$.
On the other hand, finding the new bucket costs $O(k)$ in the worst case.
However, the index of an element can only decrease, and at most $k$ times. 
Indeed this number is the complement to the cost of the first insertion. 
In other words, an element initially inserted in $\bb[\ell]$ can change of bucket at most $k - \ell$ times. 
Thus, in an amortized analysis, as done in \cite{ahuja_faster_1990}, one may charge the cost of every insertion to exactly $k$ and consider that the decrease-key costs $O(1)$.
This way, the total time spent with these operations will be $O(m + n\log(C))$. 
Observe that, the same analysis also applies to the context of a multilevel bucket structure.

The extract-min operation returns and removes an element of $\bb[1]$. 
If $\bb[1]$ is empty, the algorithm seeks the first non-empty bucket $\bb[j]$.
Then, the elements of $\bb[j]$ are transferred to a temporary bucket $\mathcal{T}$ and, during this transfer, an element $x$ with the minimum label is isolated.
The algorithm proceeds updating the structure by setting $U(0) = \ds{x} - 1, U(1)  = \ds{x}$ and $U(i) = \min\{U(i-1)+|\bb[i]|, U(j)\}$ for $2\leq i\leq j-1$.
Except for $x$, the elements in $\mathcal{T}$ are reinserted in the structure with the search for the appropriate bucket starting from $j-1$, as in the decrease-key operation, and, finally, $x$ is returned.
It takes $O(k) = O(\log (C))$ to find $\bb[j]$, and roughly speaking it is also necessary $\log(C)$ steps to transfer each element in $\bb[j]$. 
But in an amortized sense, this last $\log(C)$ factor can be discounted from the insertion.
This way, the time of a single extract-min is $O(\log(C))$. 
Therefore, Dijkstra's algorithm runs in $O(m + n\log(C))$ with this priority queue.

This update performed by the decrease-key operation is similar to the expansion procedure on the multi-level bucket structure and it is possible because, as mentioned earlier, the buckets before $\bb[j]$ have enough range to accommodate all the elements in $\bb[j]$.
However, two situations may occur after the update: buckets with invalid or overlapping ranges.
The first happens when $U(i-1)+1 > U(i)$, for some $i\in \{2, 3, \dots, k-2\}$. 
It is not a problem because $\bb[i]$ will be considered empty and not chosen during the insertion since any key in its range is also in the lower bucket $\bb[i-1]$ range.
The second is when there is some $i$ such that $U(i) = U(j)$, where $j$ is the index found during the extract min. 
In this case, the bucket $\bb[j]$, which is now empty, is the one that will be ignored by insertions and remain empty until some update occurs starting from a bucket with an index higher than $j$.

\subsection{Two-level radix heaps}
Ahuja \textit{et al.} \cite{ahuja_faster_1990} remarked that reducing the number of reinsertions of elements into buckets was crucial for decreasing the running time of Dijkstra's algorithm using radix heaps. 
They also observed that this could be done by increasing bucket widths but that would break the relation of bucket ranges.
They overcame this issue by adapting the 2-level bucket structure of Denardo and Fox \cite{denardo_shortest-route_1979} (see Section \ref{sec:2lb}). 
The idea is to split each bucket into \emph{inner-buckets} of the same size, where an inner-bucket is analogous to a bottom-level bucket (see Section \ref{sec:2lb}).

The number of buckets of a two-level radix heap is $k = \lceil\log_\Delta(C+1)\rceil+1$ where $\Delta$ is the number of inner-buckets within each bucket.
From a high-level perspective, the behavior of this priority queue is the same as the one-level version but with additional steps of accessing the correct inner-bucket inside the current bucket.
The bucket widths need to be redefined in terms of $\Delta$: for $i\in \{1, 2, \dots, k-1\}$, $|\bb[i]| = \Delta^i$ and $|\bb[k]|$ remains $nC+1$.

Assuming that initially $\mu = 0$, it follows that the first bucket ranges from $0$ to $\Delta-1$ because $|\bb[1]| = \Delta$.
Therefore, $U(1) = \Delta-1$ and $U(2) = \Delta-1 + |\bb[2]| = \Delta-1 + \Delta^2$.
Thus, generally speaking, the initial upper bounds are $U(i) = \sum_{j=1}^i \Delta^j - 1$, for $i\in \{1, 2, \dots, k-1\}$ and $U(k)=nC+1$.

The width of each bucket is equally shared among its inner-buckets.
The inner-bucket $j$ inside the bucket $\bb[i]$ is denoted by $\bb[i][j]$ and has width $|\bb[i][j]| = |\bb[i]|/\Delta = \Delta^{i-1}$.
\footnote{The two-dimensional array notation is preferable here because the buckets of the first dimension are only virtual, the real buckets here are the inner-buckets.}
If a key $y$ is in the range of the bucket $i$, then $y = U(i-1)+ d$ with $1\leq d \leq \Delta^i = |\bb[i]|$.
Furthermore, $\lceil d/\Delta^{i-1}\rceil$ gives the index $j$ of the proper inner-bucket for $y$ in $\bb[i]$, in other words, $j = \lceil \frac{y - U(i-1)}{\Delta^{i-1}}\rceil$.

To insert a new element the index $i$ of the proper bucket is found as in a one-level radix heap. 
Then the element is inserted in $\bb[i][j]$, where $j$ is computed in constant time with the above-mentioned formula. 
The decrease-key operation is also analogous to the simpler version. 
The total time spent with these two operations is $O(m + nk) = O(m + \log_\Delta(C))$.

To perform an extract-min, there is an extra step of finding the first non-empty inner-bucket within the first non-empty bucket.
Moreover, only the content of this inner-bucket is relocated during the update. 
The other details of the update can be easily modified from the one-level radix heap.
Using a linear search, it takes $O(k)$ time to find the first non-empty bucket and $O(\Delta)$ for the proper inner-bucket. 
A vertex is extracted only once and can be lowered $k$ times during updates.
Thus, the total time for extract-min operations will be $O(nk + n\Delta)$, and the total time for Dijkstra's algorithm is $O(m + n(k+\Delta))$.
As pointed out by Ahuja \textit{et al.} \cite{ahuja_faster_1990}, one can get the bound of $O(m + n\log(C)/\log\log(C))$ by merely choosing  $\Delta = O(\log(C)/\log\log(C))$ (recalling that $k$ is defined as $\lceil\log_\Delta(C+1)\rceil+1$).

\subsection{Radix heaps with Fibonacci heaps}\label{sec:radix-fib}
The third approach for radix heaps presented by Ahuja \textit{et al.} aims to speed up the time for finding non-empty inner-buckets.
The main idea is to keep the indices of non-empty inner-buckets in a Fibonacci heap. 
This priority queue supports insert and decrease-key in constant amortized time and extract-min in $O(\log(n))$ amortized. 
They extended the Fibonacci heaps from \cite{fredman_fibonacci_1987} so that when working with keys in the set $\{1, 2, \dots, N\}$, the amortized time for the extract-min operation is $O(\log(\min\{n, N\}))$, while the insert and decrease-key operations remain constant in amortized time.
The number of inner-buckets is $N=k\Delta$. 
Thus, by choosing $\Delta = 2^{\lceil\sqrt{\lg(C)}\rceil}$ one gets $k = O(\log_\Delta(C)) = O(\sqrt{\log(C)})$ and therefore $\log(N) = O(\sqrt{\log(C)})$.
It takes constant time to decide when to operate the Fibonacci heap because they are only necessary when an inner-bucket becomes empty or when there is an insertion into an empty inner-bucket.

As seen in Section \ref{sec:hq}, the time complexity of $O(m + n\sqrt{\log(C)})$ for Dijkstra's algorithm was also achieved with hot queues operating with Fibonacci heaps. 
It should be noted that radix heaps preceded hot queues, and hot queues achieve better amortized bounds by using Thorup's heaps. 
However, it is important to mention that Fibonacci and Thorup's heaps are complex data structures, and combining these priority queues is useful provided good expected times, but, in practice, this may not be efficient.

\section{Theoretical limits}\label{sec:theo}
It is well-established that $O(m  + n\log(n))$ is the best time possible for Dijkstra's algorithm when using a comparison-based priority queue and arbitrary arc weights.
In this section, some results regarding the equivalent limit to the restricted case of non-negative integer weights are discussed.
In particular, the bound above is attained using the Fibonacci heaps developed by \citeauthor{fredman_fibonacci_1987} \cite{fredman_fibonacci_1987} which performs insertions and decrease-key in amortized constant time and extract-min in $O(log(n))$ amortized.
As discussed in Section \ref{sec:radix-fib}, \citeauthor{ahuja_faster_1990} improved this data structure to obtain a better bound for the extract-min when the keys are non-negative integers limited by a constant $N$. 
At the time, they also discussed the problem of finding a bound on the form $O(m + nf(C))$ for the restricted case of integer arc weights, where is $f$ minimum.
Their result implied that $f(C)\leq \sqrt{C}$, but based on the existence of the van Emde Boas priority queue (vEB)\footnote{The data structure was designed by the first author (see the Aknowlegments of \cite{emde_boas_design_1976}) so it is mainly known by its name.} \cite{emde_boas_design_1976}, which performs all operations in time $O(\log\log(C))$, they raised the question of whether $f(C)$ could be reduced to $ \leq \log\log(C)$.

This question was answered by \citeauthor{thorup_integer_2004} \cite{thorup_integer_2004} with the following result:
\begin{theorem}[\cite{thorup_integer_2004}]\label{thm:thorup-loglog}
    One can implement a priority queue that, for $n$ elements with keys in the range $[0, N-1]$, supports insert and decrease-key in constant time, and extract-min in $O(\log\log(\min\{n, N\}))$ time.
\end{theorem}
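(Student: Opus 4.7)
The plan is to attack the $O(\log\log \min\{n, N\})$ bound by splitting into two regimes according to which of $n$ or $N$ is the bottleneck. For the small-$n$ regime, specifically when $n$ fits within a polylogarithmic function of the machine word size, I would use atomic heaps in the style of Fredman and Willard, which exploit word-level parallelism to support all priority-queue operations in constant time. This handles the $\min$ automatically: once $n \leq (\log N)^{O(1)}$, the $\log\log$ becomes $\log\log n$ and can be absorbed into a constant-time atomic-heap lookup.

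For the large-$n$ regime I would begin from a van Emde Boas (vEB) tree as the backbone, which already gives $O(\log\log N)$ for all three operations. The task is then to bring \emph{insert} and \emph{decrease-key} down to $O(1)$ without sacrificing \emph{extract-min}. I would introduce a two-level bucketing layer on top: keys are split into a high part and a low part, a vEB is maintained only over the distinct high-part values currently present, and elements sharing a high part are threaded together in a doubly-linked bucket. An \emph{insert} places the element at the head of its bucket in $O(1)$; if a brand-new bucket must be registered in the vEB, the $O(\log\log N)$ cost is paid from a pool of tokens deposited by previous inserts, yielding $O(1)$ amortized. \emph{Extract-min} uses the vEB to find the least nonempty bucket in $O(\log\log N)$ and pops an element in $O(1)$; when a bucket empties, the vEB deletion is again charged to the deposited tokens.

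The delicate case is \emph{decrease-key}. If the decrease keeps the element in its current bucket, it is a pointer update in $O(1)$. If the new key lies in a different high-part class, the element must migrate, potentially creating a new vEB entry. I would amortize this by observing that each element can only move to \emph{smaller} buckets, so across its lifetime in the structure the number of migrations is bounded, and the total vEB work can be distributed over the preceding inserts using a potential function analogous to the one used in the radix-heap amortization earlier in this paper. To make each individual migration truly $O(1)$ rather than $O(\log\log N)$, I would nest the bucketing: within each bucket the elements are further organized by an atomic heap (which is legal since each bucket has size bounded by the word-related threshold), so that the recurrent vEB interactions happen only on the top level at a geometrically reduced scale.

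The main obstacle I foresee is the amortization for decrease-key. The naive ``charge $O(\log\log N)$ to each insert'' argument fails because a single element may be the subject of many decrease-keys, each triggering a migration. Reconstructing Thorup's argument rigorously requires a careful choice of the high/low split threshold so that the number of possible bucket classes an element can visit is bounded by a quantity that matches the available token budget, and it requires checking that \emph{extract-min} continues to see a vEB of size governed by $\min\{n,N\}$ rather than the full range $N$ (which is what ultimately yields $\log\log\min\{n,N\}$ rather than $\log\log N$). Switching between the atomic-heap regime and the vEB-bucketing regime at the threshold $n = (\log N)^{O(1)}$ then delivers the claimed bound uniformly.
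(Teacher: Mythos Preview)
The paper does not prove this theorem at all; it is quoted from Thorup's 2004 paper and the surrounding text merely remarks that ``his proof is constructive'' and, crucially, that ``the times of Theorem~\ref{thm:thorup-loglog} are not amortized.'' So there is no in-paper argument to compare your sketch against.

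That said, your proposal does not reach the stated theorem, and the mismatch is structural rather than a matter of missing detail. Every step of your plan is \emph{amortized}: vEB insertions of new buckets are ``paid from a pool of tokens deposited by previous inserts,'' bucket emptying is ``charged to the deposited tokens,'' and decrease-key migrations are handled by a potential-function argument. But the theorem, as the paper explicitly emphasizes, asserts \emph{worst-case} $O(1)$ for insert and decrease-key and worst-case $O(\log\log\min\{n,N\})$ for extract-min. A token/potential scheme cannot deliver this; a single operation that triggers a vEB update will cost $\Theta(\log\log N)$ in real time regardless of how the bookkeeping is arranged.

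You also correctly identify, but do not resolve, the decrease-key obstacle: an element can undergo unboundedly many decrease-key operations, and in your bucketing each one may cross a high-part boundary. The observation that an element ``can only move to smaller buckets'' bounds the number of \emph{distinct} buckets visited, not the number of migrations, and even that bound is $\Theta(N/\text{bucket width})$ rather than anything tied to $\log\log N$. Thorup's actual construction avoids all of this by a different mechanism (a recursive split of keys combined with constant-time word tricks that locate the minimum without amortization), which is why the bounds are worst-case; reconstructing that requires ideas not present in your sketch.
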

His proof is constructive, so he presented how to build such a data structure and it is important to note that the times of Theorem \ref{thm:thorup-loglog} are not amortized. 
Observe that one can use this priority queue in a hot queue, and by carefully choosing the parameters in Theorem \ref{thm:bounds-hq} obtain the amortized time of $O(m + n\log\log(C))$ for Dijkstra's algorithm. 
However, Thorup shows as a corollary of Theorem \ref{thm:thorup-loglog} that this time can be achieved by using his data structure and some extra buckets. 
The extra buckets will work as the highest level of a multi-level bucket structure to ensure that, at each time, the main priority queue maintains only elements in the range $[iC, (i+1)C-1]$.
He does not say it explicitly in \cite{thorup_integer_2004} but two buckets are enough for this purpose, because if one keeps, for each $i\in \{0, 1, \dots, n-2\}$, a bucket $B_i$ for holding elements with distances in the range $[iC, (i+1)C-1]$, only two buckets are non-empty in any given time.

The above-mentioned work of Thorup is one in a series of studies of RAM priority queues.
These are not necessarily monotone but, as defined in \cite{brodnik_survey_2013}, store non-negative integers and have running time depending on the maximum value stored $N$, assuming that this value is known a priori.
The vEB is considered the first data structure of this category.
Using a vEB one gets a solution for the SPP in time $O(m\log\log(C))$. 
The vEB operates with $O(N)$ space in the version of \cite{van_emde_boas_preserving_1977}.
\citeauthor{willard_log-logarithmic_1983} \cite{willard_log-logarithmic_1983} achieved the same bounds of a vEB with a data structure he called Y-fast tries and reduced the space used to $O(n)$. But since he used dynamic perfect hashing, the bounds are amortized randomized $O(\log\log(N))$.

Going further on the idea of a limited universe of keys, namely, assuming that $N$ fits in a RAM word, one gets that $N \leq 2^\mathcal{W}$ and therefore the number of distinct keys is also limited by $\mathcal{W}$. 
Then, by bucketing elements with the same key one may assume that all elements in the queue have distinct keys, and therefore $n\leq N$.
In this scenario \citeauthor{fredman_surpassing_1993} \cite{fredman_surpassing_1993} introduced the fusion trees with insert and extract-min in $O(\log(n)/\log\log(n))$ opening the door for priority queues with sublogarithm operations independent of the word size, as pointed out in \cite{brodnik_survey_2013}.
\citeauthor{fredman_fibonacci_1987} also observed that with fusion trees one obtains a sorting algorithm (inserting all elements and then extracting them) that surpasses the limitations of the information-theoretic lower bound, i.e., sorting $n$ numbers requires at least $n\log(n)$ comparisons.  
In \cite{thorup_ram_1996}, Thorup showed a RAM priority queue that achieves the time $O(\log\log(n))$ per operation, which in turn,
gives a sorting algorithm of time $O(n\log\log(n))$ and also the time $O(m\log\log(n))$ for Dijkstra's algorithm.
\citeauthor{goos_priority_1996}\cite{goos_priority_1996} reduced these bounds to $O(n\sqrt{\log(n)\log\log(n)})$ and $O(m + n\sqrt{\log(n)\log\log(n)})$ with another RAM priority queue. 
From these three data structures, only the first need to assume constant time multiplication.

In \cite{thorup_ram_1996}, \citeauthor{thorup_ram_1996} also proved the following result:
\begin{theorem}[\cite{thorup_ram_1996}]
    For a RAM with arbitrary word size, if one can sort $n$ keys in time $nS(n)$, for a non-decreasing function $S$, then there is a monotone priority queue with capacity for $n$ keys, supporting insert in constant time and decrease-key in $O(S(n))$.
\end{theorem}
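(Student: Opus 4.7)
The plan is to convert the black-box sorter into a monotone priority queue through a buffering strategy that exploits non-decreasing minima: no element that has already been "resolved" (i.e., placed in the sorted main structure) will ever need to be moved, because any subsequent insert or decrease-key produces a key that is at least the current $\mu$, where $\mu$ is the most recently extracted minimum. I would maintain a sorted doubly-linked list $L$ of resolved elements and an unsorted buffer $B$ of fresh elements, together with back-pointers so each element knows where it lives.

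For \emph{insert}$(x,k)$, I would push $(x,k)$ onto $B$ in strictly constant time, updating the back-pointer. For \emph{extract-min}, if $B$ is empty, pop the head of $L$ in $O(1)$; otherwise invoke the sorter on $B$ at cost $|B|\cdot S(|B|)$ and merge the output into $L$ by walking both sorted lists linearly, then pop the head and refresh $\mu$. The trigger for sorting $B$ is chosen so that the $|B|\cdot S(|B|)$ work can be charged to the $|B|$ operations that filled $B$ since the previous sort, yielding $O(S(n))$ amortized per charged operation.

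For \emph{decrease-key}$(x,k')$, I would use the back-pointer to locate $x$ in $O(1)$, splice it out of wherever it sits (this is $O(1)$ in either a linked list or a buffer), then reinsert $(x,k')$ into $B$, potentially triggering the sort-and-merge whose cost is absorbed. Monotonicity guarantees $k' \geq \mu$, so the newly demoted element can never need to jump ahead of a resolved element that has already left the queue, and the merge always proceeds by appending suffixes of $L$ rather than rewriting its prefix.

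The main obstacle is engineering the trigger and bookkeeping so that \emph{insert} is strictly $O(1)$ (not merely amortized) while the sorting cost lands entirely on \emph{decrease-key}, as the theorem requires. This asymmetry rules out the most naive charging scheme; one must defer the sort until a decrease-key can absorb it, while simultaneously ensuring $|B|$ cannot balloon between sorts. I expect the delicate part is Thorup's incremental merging invariant, which pays off the $|B|\cdot S(|B|)$ cost over exactly the decrease-keys that contributed to $B$, using the assumption that $S$ is non-decreasing so that $S(|B|) \leq S(n)$ throughout.
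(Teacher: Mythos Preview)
The paper does not prove this theorem; it is quoted from \cite{thorup_ram_1996} without argument, so there is no in-paper proof to compare against. More importantly, the statement as printed in this survey contains a transcription error: in Thorup's original the $O(S(n))$ bound is on \emph{delete-min} (extract-min), not on decrease-key. Thorup's reduction does not even treat decrease-key as a primitive operation. This typo has derailed your analysis: your final paragraph wrestles with the artificial problem of making the sorting cost land on decrease-keys while keeping insert strictly $O(1)$, and you rightly sense that ``this asymmetry rules out the most naive charging scheme''---because the asymmetry is an artifact of the misprint. A sequence consisting of $n$ inserts followed by $n$ extract-mins contains no decrease-keys at all, yet must sort $n$ keys; under your scheme there is nothing to which the $nS(n)$ work can be charged.

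With the statement corrected, your buffer-plus-black-box-sorter plan is the right skeleton, but the accounting you sketch still does not close. Charging the $|B|\cdot S(|B|)$ sort to the $|B|$ operations that filled $B$ makes those operations $O(S(n))$ amortized, which contradicts the $O(1)$ insert you are aiming for. The cost must instead be charged to the extract-mins, and for that to yield $O(S(n))$ per extract-min one uses that a monotone sequence is balanced (every inserted key is eventually deleted), so the total sorting work over all flushes is at most $nS(n)$ and can be spread over the $n$ extract-mins; monotonicity is what guarantees that merging a freshly sorted $B$ into $L$ never needs to revisit the prefix of $L$ that has already been extracted. Your ingredients are right; the target operation and the creditor are wrong.
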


In other words, there is an equivalence between sorting in a RAM and monotone priority queues.
\citeauthor{thorup_equivalence_2007} extended this equivalence for general RAM priority queues.
These results imply that the development of better bounds for RAM priority queues is attached to the improvement of RAM sorting algorithms and vice versa. 
In particular, the bounds of Theorem \ref{thm:thorup-loglog} are currently the best for the general cases of directed graphs and non-negative integer weights. 
However, there are linear bounds for some restricted cases, see \cite{raman_recent_1997} for a more detailed list.
There are also linear bounds for the shortest path problem in undirected graphs as shown by \citeauthor{thorup_undirected_1999} in \cite{thorup_undirected_1999}. 


\section{Final remarks}\label{sec:conclusion}
 This paper has provided a review of monotone priority queues, emphasizing those derived from Dial's algorithm and tracing their evolution through subsequent improvements.
 By bucketing together vertices with the same distance, Dial's technique for the shortest path problem implicitly used a monotone priority queue even before the concept of priority queue was well-established. 
 
 Dial's method requires knowledge of the maximum possible value to be stored, but this method has the advantage that the complexity of the operations—insert, decrease-key, and extract-min—depends on this maximum value rather than on the number of elements in the queue. This property is particularly interesting for the SPP, because the maximum value is always known and is often smaller than the number of elements.

 Dial's implicit data structure was further improved, giving rise to various versions of monotone priority queues. What follows is a timeline (Figure \ref{fig:timeline}) highlighting the key points of this evolution, along with a table  (Table \ref{tab:complexities}) summarizing the time complexities discussed throughout this paper.

\begin{sidewaysfigure}[!htpb]
    \centering
    \begin{tikzpicture}[scale=0.9]
        \definecolor{year_color}{RGB}{240,128,128}
        \definecolor{event_color}{RGB}{0,0,0}
    
        \def\startyear{1955}
        \def\endyear{2005}
        \pgfmathsetmacro\totalduration{\endyear-\startyear}
        \def\totalticks{23} 
        \def\step{3}
        
        \newcommand{\event}[6]{
            \pgfmathsetmacro\xpos{\xpos + \step};
            
            \def\fonttype{\small}
            \node at (\xpos,0) (#1) [#2=3pt, text=year_color, font=\fonttype] {#1};
            \node (capA#1) [#3=39pt of #1, text=event_color, font=\fonttype] {#4};
            \draw[year_color] (#1) -- (capA#1);
            \node (capB#1) [#3=2pt of capA#1, text=event_color, font=\fonttype] {#5};
            \node (capC#1) [#3=2pt of capB#1, text=event_color, font=\fonttype] {#6};
        }
        \pgfmathsetmacro\xpos{-2};
        
        \event{1959}{below}{above}{Dijkstra algorithm \cite{dijkstra_note_1959}}{}{}
        \event{1964}{above}{below}{Loubar's Algorithm}{Heapsort algorithm \cite{williams1964algorithm}}{}
        \event{1969}{below}{above}{1-Level Bucket \cite{dial_algorithm_1969}}{Dijkstra with Heap \cite{murchland_heapdijkstra_1969}}{}
        \event{1979}{above}{below}{Multi Level Bucket \cite{dial_computational_1979}}{}{}
        \event{1990}{below}{above}{Radix Heap \cite{ahuja_faster_1990}}{}{}
        \event{1996}{above}{below}{Formalization of Monotone Queues \cite{thorup_ram_1996}}{Hot queues (first version) \cite{goos_priority_1996}}{}
        \event{1999}{below}{above}{Hot queues \cite{cherkassky_buckets_1999}}{}{}
        \event{2004}{above}{below}{Thorup's priority queue \cite{thorup_integer_2004}}{}{}
    
        \draw[thick, -stealth] (0,0) -- (\totalticks,0);
    \end{tikzpicture}
    \caption{Timeline showing key points of the evolution of monotone priority queues.}
    \label{fig:timeline}
    
\end{sidewaysfigure}

\begin{table}[!htpb]
\centering
\resizebox{\textwidth}{!}{%
\begin{tabular}{|l|l|l|l|l|}
\hline
\rowcolor[HTML]{EFEFEF} 
Priority queue                 & Insert                & Extract-min           & Decrease-Key   & Time Complexity of Dijkstra   \\ \hline
Binary Heap                    & $O(log(n))$           & $O(log(n))$           & $O(log(n))$    & $O((m + n) log(n))$           \\ \hline
1-Level Bucket                 & $O(1)$                & $O(C)$                & $O(1)$         & $O(m + nC)$                   \\ \hline
2-Level Bucket                 & $O(1)$                & $O(\sqrt{C})$         & $O(1)$         & $O(m+n \sqrt{C})$             \\ \hline
Multi-level Bucket             & $O(1)$                & $O(k+C^\frac{1}{k})$        & $O(k)$         & $O(km + n(k+C^\frac{1}{k}))$        \\ \hline
One-level radix heaps          & $O(log(C))$           & $O(log(C))$           & $O(1)$         & $O(m + n\log(C))$             \\ \hline
Two-level radix heaps          & $O(log(C)/loglog(C))$ & $O(log(C)/loglog(C))$ & $O(1)$         & $O(m + n\log(C)/\log\log(C))$ \\ \hline
Hot queues with Fibonacci heap & $O(\log^\frac{1}{2}(C))$    & $O(\log^\frac{1}{2}(C))$ & $O(1)$            & $O(m+n\log({C})^\frac{1}{2})$ \\ \hline
Thorup's heap                  & $O(1)$                & $O(1)$                & $O(loglog(n))$ & $O(m\log\log(n))$             \\ \hline
\end{tabular}%
}
\caption{Table of with the time of complexities for different priority queues.}
\label{tab:complexities}
\end{table}

\bibliography{pq_references}

\begin{thebibliography}{34}
\providecommand{\natexlab}[1]{#1}
\providecommand{\url}[1]{\texttt{#1}}
\expandafter\ifx\csname urlstyle\endcsname\relax
  \providecommand{\doi}[1]{doi: #1}\else
  \providecommand{\doi}{doi: \begingroup \urlstyle{rm}\Url}\fi

\bibitem[Ahuja et~al.(1990)Ahuja, Mehlhorn, Orlin, and Tarjan]{ahuja_faster_1990}
Ravindra~K. Ahuja, Kurt Mehlhorn, James Orlin, and Robert~E. Tarjan.
\newblock Faster algorithms for the shortest path problem.
\newblock \emph{Journal of the ACM}, 37\penalty0 (2):\penalty0 213--223, April 1990.
\newblock ISSN 0004-5411, 1557-735X.
\newblock \doi{10.1145/77600.77615}.
\newblock URL \url{https://dl.acm.org/doi/10.1145/77600.77615}.

\bibitem[Andersson and Thorup()]{andersson_pragmatic_nodate}
Arne Andersson and Mikkel Thorup.
\newblock A {Pragmatic} {Implemention} of {Monotone} {Priority} {Queues}.

\bibitem[Brodal(2013)]{brodnik_survey_2013}
Gerth~Stølting Brodal.
\newblock A {Survey} on {Priority} {Queues}.
\newblock In Andrej Brodnik, Alejandro López-Ortiz, Venkatesh Raman, and Alfredo Viola, editors, \emph{Space-{Efficient} {Data} {Structures}, {Streams}, and {Algorithms}}, volume 8066, pages 150--163. Springer Berlin Heidelberg, Berlin, Heidelberg, 2013.
\newblock ISBN 978-3-642-40272-2 978-3-642-40273-9.
\newblock \doi{10.1007/978-3-642-40273-9_11}.
\newblock URL \url{http://link.springer.com/10.1007/978-3-642-40273-9_11}.
\newblock Series Title: Lecture Notes in Computer Science.

\bibitem[Cherkassky et~al.(1999)Cherkassky, Goldberg, and Silverstein]{cherkassky_buckets_1999}
Boris~V. Cherkassky, Andrew~V. Goldberg, and Craig Silverstein.
\newblock Buckets, {Heaps}, {Lists}, and {Monotone} {Priority} {Queues}.
\newblock \emph{SIAM Journal on Computing}, 28\penalty0 (4):\penalty0 1326--1346, January 1999.
\newblock ISSN 0097-5397, 1095-7111.
\newblock \doi{10.1137/S0097539796313490}.
\newblock URL \url{http://epubs.siam.org/doi/10.1137/S0097539796313490}.

\bibitem[Cormen et~al.(2022)Cormen, Leiserson, Rivest, and Stein]{cormen2022introduction}
Thomas~H Cormen, Charles~E Leiserson, Ronald~L Rivest, and Clifford Stein.
\newblock \emph{Introduction to algorithms}.
\newblock MIT press, 2022.

\bibitem[Denardo and Fox(1979)]{denardo_shortest-route_1979}
Eric~V. Denardo and Bennett~L. Fox.
\newblock Shortest-{Route} {Methods}: 1. {Reaching}, {Pruning}, and {Buckets}.
\newblock \emph{Operations Research}, 27\penalty0 (1):\penalty0 161--186, 1979.
\newblock URL \url{http://www.jstor.org/stable/170249}.

\bibitem[Dial et~al.(1979)Dial, Glover, Karney, and Klingman]{dial_computational_1979}
R.~Dial, F.~Glover, D.~Karney, and D.~Klingman.
\newblock A computational analysis of alternative algorithms and labeling techniques for finding shortest path trees.
\newblock \emph{Networks}, 9\penalty0 (3):\penalty0 215--248, September 1979.
\newblock ISSN 0028-3045, 1097-0037.
\newblock \doi{10.1002/net.3230090304}.
\newblock URL \url{https://onlinelibrary.wiley.com/doi/10.1002/net.3230090304}.

\bibitem[Dial(1969)]{dial_algorithm_1969}
Robert~B. Dial.
\newblock Algorithm 360: shortest-path forest with topological ordering [{H}].
\newblock \emph{Communications of the ACM}, 12\penalty0 (11):\penalty0 632--633, November 1969.
\newblock ISSN 0001-0782, 1557-7317.
\newblock \doi{10.1145/363269.363610}.
\newblock URL \url{https://dl.acm.org/doi/10.1145/363269.363610}.

\bibitem[Dijkstra(1959)]{dijkstra_note_1959}
E.~W. Dijkstra.
\newblock A note on two problems in connexion with graphs.
\newblock \emph{Numerische Mathematik}, 1\penalty0 (1):\penalty0 269--271, December 1959.
\newblock ISSN 0029-599X, 0945-3245.
\newblock \doi{10.1007/BF01386390}.
\newblock URL \url{http://link.springer.com/10.1007/BF01386390}.

\bibitem[Dinic(1978)]{dinic_economical_1978}
EA~Dinic.
\newblock Economical algorithms for finding shortest paths in a network.
\newblock \emph{Transportation Modeling Systems}, pages 36--44, 1978.

\bibitem[Emde~Boas et~al.(1976)Emde~Boas, Kaas, and Zijlstra]{emde_boas_design_1976}
P.~Emde~Boas, R.~Kaas, and E.~Zijlstra.
\newblock Design and implementation of an efficient priority queue.
\newblock \emph{Mathematical Systems Theory}, 10\penalty0 (1):\penalty0 99--127, December 1976.
\newblock ISSN 0025-5661, 1433-0490.
\newblock \doi{10.1007/BF01683268}.
\newblock URL \url{http://link.springer.com/10.1007/BF01683268}.

\bibitem[Ferone et~al.(2017)Ferone, Festa, Napoletano, and Pastore]{ferone_shortest_2017}
Daniele Ferone, Paola Festa, Antonio Napoletano, and Tommaso Pastore.
\newblock {SHORTEST} {PATHS} {ON} {DYNAMIC} {GRAPHS}: {A} {SURVEY}.
\newblock \emph{Pesquisa Operacional}, 37\penalty0 (3):\penalty0 487--508, September 2017.
\newblock ISSN 1678-5142, 0101-7438.
\newblock \doi{10.1590/0101-7438.2017.037.03.0487}.
\newblock URL \url{http://www.scielo.br/scielo.php?script=sci_arttext&pid=S0101-74382017000300487&lng=en&tlng=en}.

\bibitem[Fredman and Tarjan(1987)]{fredman_fibonacci_1987}
Michael~L. Fredman and Robert~Endre Tarjan.
\newblock Fibonacci heaps and their uses in improved network optimization algorithms.
\newblock \emph{Journal of the ACM}, 34\penalty0 (3):\penalty0 596--615, July 1987.
\newblock ISSN 0004-5411.
\newblock \doi{10.1145/28869.28874}.
\newblock URL \url{https://dl.acm.org/doi/10.1145/28869.28874}.

\bibitem[Fredman and Willard(1993)]{fredman_surpassing_1993}
Michael~L. Fredman and Dan~E. Willard.
\newblock Surpassing the information theoretic bound with fusion trees.
\newblock \emph{Journal of Computer and System Sciences}, 47\penalty0 (3):\penalty0 424--436, December 1993.
\newblock ISSN 0022-0000.
\newblock \doi{10.1016/0022-0000(93)90040-4}.
\newblock URL \url{https://www.sciencedirect.com/science/article/pii/0022000093900404}.

\bibitem[Gilsinn and Witzgall(1973)]{gilsinn_performance_1973}
Judith~F Gilsinn and Christoph Witzgall.
\newblock \emph{A performance comparison of labeling algorithms for calculating shortest path trees}, volume~13 of \emph{{NBS} {Technical} {Note} {777}}.
\newblock National Bureau of Standards, National Bureau of Standards, Washington, D.C., 1973.

\bibitem[Goldberg(2001{\natexlab{a}})]{goldberg_shortest_2001}
Andrew~V. Goldberg.
\newblock Shortest {Path} {Algorithms}: {Engineering} {Aspects}.
\newblock In Peter Eades and Tadao Takaoka, editors, \emph{Algorithms and {Computation}}, pages 502--513, Berlin, Heidelberg, 2001{\natexlab{a}}. Springer.
\newblock ISBN 978-3-540-45678-0.
\newblock \doi{10.1007/3-540-45678-3_43}.

\bibitem[Goldberg(2001{\natexlab{b}})]{goos_simple_2001}
Andrew~V. Goldberg.
\newblock A {Simple} {Shortest} {Path} {Algorithm} with {Linear} {Average} {Time}.
\newblock In Gerhard Goos, Juris Hartmanis, Jan Van~Leeuwen, and Friedhelm~Meyer Auf Der~Heide, editors, \emph{Algorithms — {ESA} 2001}, volume 2161, pages 230--241. Springer Berlin Heidelberg, Berlin, Heidelberg, 2001{\natexlab{b}}.
\newblock ISBN 978-3-540-42493-2 978-3-540-44676-7.
\newblock \doi{10.1007/3-540-44676-1_19}.
\newblock URL \url{http://link.springer.com/10.1007/3-540-44676-1_19}.
\newblock Series Title: Lecture Notes in Computer Science.

\bibitem[Goldberg(2008)]{goldberg_practical_2008}
Andrew~V. Goldberg.
\newblock A {Practical} {Shortest} {Path} {Algorithm} with {Linear} {Expected} {Time}.
\newblock \emph{SIAM Journal on Computing}, 37\penalty0 (5):\penalty0 1637--1655, January 2008.
\newblock ISSN 0097-5397, 1095-7111.
\newblock \doi{10.1137/070698774}.
\newblock URL \url{http://epubs.siam.org/doi/10.1137/070698774}.

\bibitem[Goldberg and Silverstein(1997)]{goldberg_implementations_1997}
Andrew~V. Goldberg and Craig Silverstein.
\newblock Implementations of {Dijkstra}’s {Algorithm} {Based} on {Multi}-{Level} {Buckets}.
\newblock In G.~Fandel, W.~Trockel, Panos~M. Pardalos, Donald~W. Hearn, and William~W. Hager, editors, \emph{Network {Optimization}}, volume 450, pages 292--327. Springer Berlin Heidelberg, Berlin, Heidelberg, 1997.
\newblock ISBN 978-3-540-62541-4 978-3-642-59179-2.
\newblock \doi{10.1007/978-3-642-59179-2_15}.
\newblock URL \url{http://link.springer.com/10.1007/978-3-642-59179-2_15}.
\newblock Series Title: Lecture Notes in Economics and Mathematical Systems.

\bibitem[Hitchner(1968)]{hitchner_comparative_1968}
Lewis~E Hitchner.
\newblock A comparative investigation of the computational efficiency of shortest path algorithms.
\newblock \emph{Operations Research Center, University of California, Berkeley, Report ORC}, pages 68--17, October 1968.

\bibitem[Johnson(1972)]{johnson_shortest_1972}
Ellis~L. Johnson.
\newblock On shortest paths and sorting.
\newblock In \emph{Proceedings of the {ACM} annual conference on - {ACM}'72}, volume~1, page 510, Boston, Massachusetts, United States, 1972. ACM Press.
\newblock \doi{10.1145/800193.569965}.
\newblock URL \url{http://portal.acm.org/citation.cfm?doid=800193.569965}.

\bibitem[Moore(1959)]{moore_shortest_1959}
E.F. Moore.
\newblock \emph{The {Shortest} {Path} {Through} a {Maze}}.
\newblock Bell {Telephone} {System}. {Technical} publications. monograph. Bell Telephone System., 1959.
\newblock URL \url{https://books.google.com.br/books?id=IVZBHAAACAAJ}.

\bibitem[Murchland(1969)]{murchland_heapdijkstra_1969}
J.D. Murchland.
\newblock The 'once-through' method of finding all shortest distances in a graph from a single origin.
\newblock \emph{mimeographed technical report of London School of Economics, Transport Network Theory}, 1969.

\bibitem[Raman(1996)]{goos_priority_1996}
Rajeev Raman.
\newblock Priority queues: {Small}, monotone and trans-dichotomous.
\newblock In Gerhard Goos, Juris Hartmanis, Jan Leeuwen, Josep Diaz, and Maria Serna, editors, \emph{Algorithms — {ESA} '96}, volume 1136, pages 121--137. Springer Berlin Heidelberg, Berlin, Heidelberg, 1996.
\newblock ISBN 978-3-540-61680-1 978-3-540-70667-0.
\newblock \doi{10.1007/3-540-61680-2_51}.
\newblock URL \url{http://link.springer.com/10.1007/3-540-61680-2_51}.
\newblock Series Title: Lecture Notes in Computer Science.

\bibitem[Raman(1997)]{raman_recent_1997}
Rajeev Raman.
\newblock Recent results on the single-source shortest paths problem.
\newblock \emph{ACM SIGACT News}, 28\penalty0 (2):\penalty0 81--87, June 1997.
\newblock ISSN 0163-5700.
\newblock \doi{10.1145/261342.261352}.
\newblock URL \url{https://dl.acm.org/doi/10.1145/261342.261352}.

\bibitem[Thorup(1996)]{thorup_ram_1996}
Mikkel Thorup.
\newblock On {RAM} priority queues.
\newblock In \emph{Proceedings of the {Seventh} {Annual} {ACM}-{SIAM} {Symposium} on {Discrete} {Algorithms}}, volume~81, page~59. SIAM, 1996.

\bibitem[Thorup(1999)]{thorup_undirected_1999}
Mikkel Thorup.
\newblock Undirected single-source shortest paths with positive integer weights in linear time.
\newblock \emph{J. ACM}, 46\penalty0 (3):\penalty0 362--394, May 1999.
\newblock ISSN 0004-5411.
\newblock \doi{10.1145/316542.316548}.
\newblock URL \url{https://doi.org/10.1145/316542.316548}.

\bibitem[Thorup(2004)]{thorup_integer_2004}
Mikkel Thorup.
\newblock Integer priority queues with decrease key in constant time and the single source shortest paths problem.
\newblock \emph{Journal of Computer and System Sciences}, 69\penalty0 (3):\penalty0 330--353, November 2004.
\newblock ISSN 00220000.
\newblock \doi{10.1016/j.jcss.2004.04.003}.
\newblock URL \url{https://linkinghub.elsevier.com/retrieve/pii/S002200000400042X}.

\bibitem[Thorup(2007)]{thorup_equivalence_2007}
Mikkel Thorup.
\newblock Equivalence between priority queues and sorting.
\newblock \emph{Journal of the ACM}, 54\penalty0 (6):\penalty0 28, December 2007.
\newblock ISSN 0004-5411, 1557-735X.
\newblock \doi{10.1145/1314690.1314692}.
\newblock URL \url{https://dl.acm.org/doi/10.1145/1314690.1314692}.

\bibitem[van Emde~Boas(1977)]{van_emde_boas_preserving_1977}
P.~van Emde~Boas.
\newblock Preserving order in a forest in less than logarithmic time and linear space.
\newblock \emph{Information Processing Letters}, 6\penalty0 (3):\penalty0 80--82, June 1977.
\newblock ISSN 0020-0190.
\newblock \doi{10.1016/0020-0190(77)90031-X}.
\newblock URL \url{https://www.sciencedirect.com/science/article/pii/002001907790031X}.

\bibitem[Van~Leeuwen(1991)]{van1991handbook}
Jan Van~Leeuwen.
\newblock \emph{Handbook of theoretical computer science (vol. A) algorithms and complexity}.
\newblock Mit Press, 1991.

\bibitem[Wagner(1976)]{wagner_shortest_1976}
Robert~A. Wagner.
\newblock A {Shortest} {Path} {Algorithm} for {Edge}-{Sparse} {Graphs}.
\newblock \emph{Journal of the ACM}, 23\penalty0 (1):\penalty0 50--57, January 1976.
\newblock ISSN 0004-5411, 1557-735X.
\newblock \doi{10.1145/321921.321927}.
\newblock URL \url{https://dl.acm.org/doi/10.1145/321921.321927}.

\bibitem[Willard(1983)]{willard_log-logarithmic_1983}
Dan~E. Willard.
\newblock Log-logarithmic worst-case range queries are possible in space {$\theta({N})$}.
\newblock \emph{Information Processing Letters}, 17\penalty0 (2):\penalty0 81--84, August 1983.
\newblock ISSN 0020-0190.
\newblock \doi{10.1016/0020-0190(83)90075-3}.
\newblock URL \url{https://www.sciencedirect.com/science/article/pii/0020019083900753}.

\bibitem[Williams(1964)]{williams1964algorithm}
J.~W.~J. Williams.
\newblock Algorithm 232: Heapsort.
\newblock \emph{Communications of the ACM}, 7\penalty0 (6):\penalty0 347–348, 1964.

\end{thebibliography}

\end{document}